\theoremstyle{plain}
\newtheorem{thm}{thm}[section]
\newtheorem{theorem}[thm]{Theorem}
\newtheorem{lem}[thm]{Lemma}
\newtheorem{prop}[thm]{Proposition}
\newtheorem{defn}{Definition}[section]
\newtheorem*{rem}{Remark}
\newtheorem{prob}{Problem}
\newcommand{\be}{\begin{enumerate}}
\newcommand{\ee}{\end{enumerate}}
\newcommand{\bi}{\begin{itemize}}
\newcommand{\ei}{\end{itemize}}
\newcommand\etal{\textit{et al.}}
\newcommand\ie{\textit{i.e.}}
\newcommand{\ra}{\rightarrow}
\newcommand{\Lra}{\Leftrightarrow}
\newcommand{\cF}{\mathcal{F}}
\newcommand{\cG}{\mathcal{G}}
\newcommand{\cH}{\mathcal{H}}
\newcommand{\cJ}{\mathcal{J}}
\newcommand{\N}{\mathbb{N}}
\newcommand{\B}{\mathbb{B}}
\newcommand{\true}{\mathsf{true}}
\newcommand{\false}{\mathsf{false}}
\tikzstyle{smallblock} = [draw, fill=white, square state, 
\tikzstyle{block} = [draw, fill=white, square state, 
\tikzstyle{bigblock} = [draw, fill=white, square state, 
\tikzstyle{input} = [coordinate]
\tikzstyle{output} = [coordinate]
\tikzstyle{pinstyle} = [pin edge={to-,thin,black}]
\tikzset{
->, 
>=stealth', 
node distance=1cm, 
initial text=$ $, 
semithick
}
\tikzset{every state/.style={minimum size=15pt}}
\tikzset{
  every picture/.append style={remember picture,inner xsep=0,inner ysep=0},
}
\tikzset{every loop/.style={min distance=1mm,in=-30,out=30,looseness=5}}
\tikzset{square state/.style={draw,rectangle,minimum size=5mm}}
\newcommand{\occ}{\mathit{occ}}
\newcommand{\infi}{\mathit{inf}}
\newcommand{\reach}{\text{\textsc{Reach}}}
\newcommand{\safe}{\text{\textsc{Safe}}}
\newcommand{\buchi}{\text{\textsc{B{\"u}chi}}}
\newcommand{\cobuchi}{\text{\textsc{coB{\"u}chi}}}
\newcommand{\parity}{\text{\textsc{Parity}}}
\newcommand{\muller}{\text{\textsc{Muller}}}
\newcommand{\p}{\mathsf{P}}
\newcommand{\np}{\mathsf{NP}}
\newcommand{\pspace}{\mathsf{PSPACE}}
\newcommand{\sat}{\mathsf{SAT}}
\newcommand{\osat}{\oplus\mathsf{SAT}}
\newcommand{\tqbf}{\mathsf{TQBF}}
\newcommand{\swdp}{\mathsf{SWDP}}
\newcommand{\podp}{\mathsf{PODP}}
\newcommand{\state}{\mathsf{St}}
\newcommand{\agt}{\mathsf{Agt}}
\newcommand{\act}{\mathsf{Act}}
\newcommand{\tr}{\mathsf{tr}}
\newcommand{\avb}{\mathsf{avb}}
\newcommand{\plays}{\mathsf{Plays}}
\newcommand{\last}{\mathsf{last}}
\newcommand{\hist}{\mathsf{Hist}}
\newcommand{\out}{\mathsf{Out}}
\newcommand{\obj}{\mathsf{Obj}}
\newcommand{\pay}{\mathsf{Payoff}}
\newcommand{\win}{\mathsf{Win}}
\newcommand{\sw}{\mathsf{sw}}
\newcommand{\susp}{\mathsf{Susp}}
\newcommand{\eve}{\textsf{Eve}}
\newcommand{\adam}{\textsf{Adam}}
\newcommand{\proj}{\mathit{proj}}
\newcommand{\ltl}{\mathsf{LTL}}
\newcommand{\gr}{\mathsf{GR(1)}}
\title{The Complexity of Pure Strategy Relevant Equilibria in Concurrent Games}
\author{Purandar Bhaduri
\institute{Indian Institute of Technology Guwahati, India}
\email{pbhaduri@iitg.ac.in}}
\begin{document}
\maketitle

\begin{abstract}
We study rational synthesis problems for concurrent games with $\omega$-regular objectives. Our model of rationality considers only pure strategy Nash equilibria that satisfy either a social welfare or Pareto optimality condition with respect to an $\omega$-regular objective for each agent. This extends earlier work on equilibria in concurrent games, without consideration about their quality. Our results show that the existence of Nash equilibria satisfying social welfare conditions can be computed as efficiently as the constrained Nash equilibrium existence problem. On the other hand, the existence of Nash equilibria satisfying the Pareto optimality condition possibly involves a higher upper bound, except in the case of B{\"u}chi and Muller games, for which all three problems are in
the classes $\p$ and $\pspace$-complete, respectively.
\end{abstract}

\thanks{This work has been supported by the MATRICS grant MTR/2021/000003 from SERB India, titled "Games and Controller Synthesis".}

\section{Introduction}
\label{sec:intro}

Infinite games on finite graphs play a fundamental role in the automated synthesis of reactive systems from their specification~\cite{finkbeiner2016synthesis,BloemCJ18}. The goal of reactive synthesis is to design a system that meets its specification in all possible environments. This problem can be modelled as a zero-sum game between the system and the environment, where a winning strategy for the system yields the design of a correct-by-construction controller.

In many situations involving autonomous agents, such as robots, drones, and
autonomous vehicles, a purely adversarial view of the environment of a system is not appropriate. Instead, each agent should be viewed as trying to satisfy her own objective rather than preventing the other agents from meeting theirs. This gives rise to the notion of \emph{rational synthesis}~\cite{fisman2010rational,kupferman2016synthesis}, where we restrict our attention to agent behaviours that arise from game-theoretic equilibria. The most widely studied equilibrium is the Nash equilibrium (NE)~\cite{nash1950equilibrium}, a strategy profile for which no agent has the incentive to unilaterally deviate from her strategy.

In this paper we investigate the complexity of computing certain desirable Nash
equilibria in concurrent games played on finite graphs. We call these equilibria
\emph{relevant} after \cite{brihaye2021relevant}. Our games have multiple
agents (or players), each with an $\omega$-regular objective. The games are concurrent, where the agents make their moves simultaneously. Each agent tries to fulfil her own objective while being subject to the decisions of the other agents. Such games provide a common framework for modelling systems with multiple, distributed, and autonomous agents.

It is well known that for such games pure strategy Nash equilibria may not exist and may not be unique when they do. Some equilibria may not be optimal, where only a few or none of the agents meet their objectives. We focus on two measures of the
quality of equilibria -- social welfare and Pareto optimality.

Intuitively, the social welfare criterion considers the number of agents who
meet their objectives in a given equilibrium. Pareto optimal equilibria, on the
other hand, consider equilibria which are maximal in the sense that no additional
agent can meet her objective in any behaviour of the system, whether it is the
result of an equilibrium or not. 

The specific problems which we are interested in are (i) the Constrained NE Existence Problem, asking whether there is an equilibrium satisfying a given lower and upper threshold for the payoff of each agent, (ii) the Social Welfare Decision Problem ($\swdp$), asking whether there is an equilibrium where the number of agents meeting their objectives is above a given threshold,
and (iii) the Pareto Optimal Decision Problem ($\podp$), where the problem is to decide whether there is an equilibrium such that no other strategy profile, whether an equilibrium or not, results in a strict superset of agents meeting their objectives. The complexity of the Constrained NE Existence Problem for different $\omega$-regular objectives was extensively studied by Bouyer \etal~\cite{bouyer2015pure}. We mention them here as a point of reference and also because many of our results depend on them.

A summary of our results for $\swdp$ and $\podp$ is shown in Table~\ref{tab:results}. The results show that the existence of NE satisfying a social welfare condition can be decided as efficiently as the constrained NE existence problem. However, a Pareto optimality condition seems to entail a higher upper bound, except for B{\"u}chi and Muller objectives, for which all three problems are in the classes $\p$ and $\pspace$-complete, respectively.

\begin{table}[t]
\centering
\begin{tabular}{ |p{2cm}||p{2.75cm}|p{2.2cm}|p{2.5cm}|  }
\hline
{\bf Objective} & {\bf Constrained NE Existence \cite{bouyer2015pure} } & {\bf Social Welfare} ($\swdp)$ & {\bf Pareto Optimality} ($\podp$) \\
\hline
Reachability &  $\np$-c & $\np$-c & $\np$-h, $\p^\np$  \\
\hline
Safety & $\np$-c  & $\np$-c & $\np$-h, $\p^\np$  \\ 
\hline
B{\"u}chi & $\p$-c & $\p$ & $\p$ \\
\hline
coB{\"u}chi & $\np$-c & $\np$-c & $\np$-h, $\p^\np$   \\
\hline
Parity & $\p^{\np}_\parallel$-c  & $\np$-h, $\p^{\np}_\parallel$  & $\p^\np$ \\
\hline
Muller & $\pspace$-c  & $\pspace$-c & $\pspace$-c \\
\hline
\end{tabular}
\caption{Summary of complexities for relevant equilibria}
\label{tab:results}
\end{table}

\subsection{Related Work}
\label{subsec:related}

Concurrent game structures were introduced by Alur \etal~\cite{alur1997alternating}
for modelling the behaviour of open systems containing both system and environment components, called players or agents. Every state transition in such a game is determined by a choice of move by each player and models their synchronous composition. 

The seminal work related to problems on Nash equilibria for concurrent games is
by Bouyer \etal~\cite{bouyer2015pure}. This paper studies Nash equilibria for
pure-strategy multiplayer concurrent deterministic games for various
$\omega$-regular objectives and their generalisations. It gives
comprehensive results for the Value Problem, the NE Existence Problem and the
Constrained NE Existence Problem for all these objectives. The present work is
an extension where we also study the existence of desirable equilibria,
namely the existence of NE with lower bounds on social welfare and those satisfying
Pareto optimality.

The term `relevant equilibria' was introduced by Brihaye \etal~\cite{brihaye2021relevant}
to refer to desirable equilibria such as those with high social welfare and those that are Pareto optimal. This paper was in the context of turn-based multiplayer games with quantitative reachability objectives. In contrast, our work is about concurrent qualitative games with the $\omega$-regular objectives specified above.

Another line of work that deals with infinite concurrent games on finite graphs
with both qualitative and quantitative objectives is by Gutierrez
\etal~\cite{gutierrez2023complexity}. In contrast to our work, their focus is on
$\ltl$ and $\gr$ objectives, a setting where the input sizes can
be exponentially more succinct and the expressive power somewhat restricted.

Condurache \etal~\cite{condurache2016complexity} have also investigated rational synthesis problems for concurrent games for the same objectives that we consider in this paper. However, their interest lies in \emph{non-cooperative} rational synthesis, where the aim is to synthesise a strategy for agent~$0$ such that with respect to \emph{all} NE involving the rest of the agents, agent~$0$'s payoff is always $1$. Computationally, this is a harder problem than the \emph{cooperative} rational synthesis problem in our work, where we are interested in the existence of \emph{at least one} NE satisfying some constraints. Moreover, \cite{condurache2016complexity} does not consider the quality of NEs, in contrast to the present work.

\section{Preliminaries}
\label{sec:prelims}

\subsection{Notation}
\label{subsec:not}
$\B$ and $\N$ denote the sets $\{0,1\}$ of Boolean values and of
natural numbers, respectively. A \emph{word} over a finite alphabet $\Sigma$ is
a finite sequence of symbols from $\Sigma$. $\Sigma^\ast$ denotes the set of all
words, including the empty word $\varepsilon$. $\Sigma^\omega$ denotes the set
of all \emph{infinite words}, \ie, infinite sequences over $\Sigma$. For $m,n
\in \N$ with $m \leq n$ and a (finite or infinite) word $\alpha$, we denote by
$\alpha[j]$ the $j+1$-th letter of $\alpha$ and by $\alpha[m,n]$ the finite word
$\alpha[m]\alpha[m+1]\ldots\alpha[n]$. 

\subsection{Concurrent Game Structures}
\label{subsec:cgs}
\begin{defn}

A \emph{concurrent game structure} (CGS) is a tuple $\cG = (\state, \agt,
\act, \avb, \tr)$ where $\state$ is a finite non-empty set of states, $\agt = \{1,\ldots,n\}$ is the set of agents (or players), $\act$ is a finite set of actions, the map $\avb: \state \times \agt \ra 2^\act \setminus \{\emptyset\}$ indicates the actions available in a given state for a given agent, and $\tr : \state \times \act^n \ra \state$ is the transition function.
\end{defn}

An \emph{action profile} or \emph{move} $\bar{a} = (a_1,a_2, \ldots, a_n) \in 
\act^n$ is just an $n$-tuple of actions. Here, $a_i$ is the action taken by
agent~$i$. We often write $\bar{a}(i)$ for $a_i$. We say $\bar{a}$ is
\emph{legal} at a state $s$ if $a_i \in \avb(s,i)$ for all $i \in \agt$.
We call a CGS \emph{turn-based} if, for each state, the set of available moves
is a singleton for all but at most one player; such a player is said to \emph{own}
the state.\footnote{We prefer the term player to agent when referring to 
turn-based games.}

\begin{defn}
A \emph{play} in the CGS $\cG$ is a sequence of states 
$\rho = s_0  s_1  s_2  \ldots$ such that $s_{j+1} =
\tr(s_j,\bar{a}_j)$ for some legal move $\bar{a}_j$ in state
$s_j$, for all $j \geq 0$.
\end{defn}

Let $\plays(\cG)$ be the set of all plays in the CGS $\cG$. A \emph{history} $h$
is any finite prefix of a play. Let $\hist(\cG)$ be the set of
all histories in $\cG$. We often drop $\cG$ when referring to plays and
histories when it is clear from the context. The last element of a history $h$ is denoted by $\last(h)$.

\begin{defn} 
A strategy for agent~$i$ is a map $\sigma_i : \hist \ra \act$ such that $\sigma_i(h) \in \avb(\last(h),i)$. A
\emph{strategy profile} is a tuple $\bar{\sigma} = \langle \sigma_1, \ldots,
\sigma_n \rangle$ of strategies, one for each agent.
\end{defn}

By convention, $\bar{\sigma}_{-i}$ is the tuple of strategies excluding that of
agent~$i$ and $\langle \bar{\sigma}_{-i},\sigma'_i \rangle$ is obtained from the
profile $\bar{\sigma}$ by substituting agent~$i$'s strategy $\sigma_i$
by $\sigma'_i$. Note that our strategies are \emph{pure}, \ie, they do not involve any randomisation.

We say that a play $\rho$ is \emph{compatible} with a strategy $\sigma_i$ of agent~$i$ if for every prefix
$\rho[0,k]$ of $\rho$ with $k \geq 0$ we have $\sigma_i(\rho[0,k]) =
a_{i_k}$ and $\tr(\rho[k],\langle a_{-i_k}, a_{i_k} \rangle) = \rho[k+1]$, for some action profile $a_{-i_k}$ of the other agents making $\langle a_{-i_k}, a_{i_k} \rangle$ legal in $\rho[k]$. We can define compatibility between a history and an agent~$i$ strategy in a similar way. For a coalition $P \subseteq \agt$, and a tuple $\sigma_P$ of strategies for the agents in $P$, we write $\out_\cG(\sigma_P)$ for the set of plays (called \emph{outcomes}) in $\cG$ that are compatible with strategy $\sigma_i$ for every $i \in P$. Note that a strategy profile $\bar{\sigma}$ and an initial state $s$ uniquely define a play $\out(s,\bar{\sigma})$, referred to as its \emph{outcome}. 


\begin{rem}
We assume that the transition function $\tr$ is represented explicitly as a table when $\cG$ is an input to an algorithm. Its size $|\tr|$ is $\sum_{s \in \state}\prod_{i \in \agt} \avb(s,i)\cdot\lceil \log(|\state|)\rceil$ and this can be exponential in the number $n$ of agents.
\end{rem}

\subsection{Concurrent Games and Solution Concepts}
\label{subsec:soln}

\paragraph{\bf Omega-regular Games}

A concurrent (or multiplayer) game is a pair $\langle \cG, (\obj_i )_{i \in
\agt} \rangle$ where $\cG$ is a CGS and $\obj_i \subseteq \state^\omega$ is the
\emph{objective} for agent~$i$. Thus, an objective is a set of infinite sequences  
of states in $\cG$. For us, an objective can be any one of safety,
reachability, B{\"u}chi, coB{\"u}chi, parity or Muller, defined below.
For a play $\rho$ in a CGS $\cG$ we write $\occ(\rho)$ for the states that occur
in $\rho$ and $\infi(\rho)$ for the states that occur infinitely often
in $\rho$, \ie, $\occ(\rho) = \{s \in \state \mid \exists j \geq 0.\; s =
\rho[j]\}$ and $\infi(\rho) = \{s \in \state \mid \forall j \geq 0.\; \exists k \geq
j.\; s = \rho[k]\}$. Then, we consider the following objectives:

\be
\item\emph{Reachability:\/} Given a set $F \subseteq \state$ of \emph{target states}, $\reach(F) = \{\rho \in \state^\omega \mid \occ(\rho) \cap F \neq \emptyset\}$;
\item\emph{Safety:\/} Given a set $F \subseteq \state$ of \emph{unsafe states}, $\safe(F) = \{\rho \in \state^\omega \mid \occ(\rho) \cap F = \emptyset\}$;
\item\emph{B{\"u}chi:\/} Given a set $F \subseteq \state$ of \emph{accept states}, $\buchi(F) =  \{\rho \in \state^\omega \mid \infi(\rho) \cap F \neq \emptyset\}$;
\item\emph{coB{\"u}chi:\/} Given a set $F \subseteq \state$ of \emph{reject states}, $\cobuchi(F) =  \{\rho \in \state^\omega \mid \infi(\rho) \cap F = \emptyset\}$;
\item\emph{Parity:\/} For a given \emph{priority function} $p:\state\ra\N$, $\parity(p) = \{\rho \in \state^\omega \mid
\min\{p(s) \mid s \in \infi(\rho)\} \text{ is even}\}$;
\item\emph{Muller:\/} For a given finite set $C$ of \emph{colours}, a {colouring function} $c: \state \ra C$ and a set $\cF \subseteq 2^C$, $\muller(\varphi) = \{\rho \in \state^\omega \mid \infi(c(\rho)) \in \cF \}$. Here $c(\rho)$ is the infinite sequence of colours of the states in the sequence $\rho$ and $\infi(c(\rho))$ is the set of colours appearing infinitely often in the sequence $c(\rho)$.
\ee

For a given play $\rho$, the \emph{payoff} of $\rho$, denoted $\pay(\rho)$, is given by the tuple $\langle \pay_i(\rho) \rangle_{i \in \agt}$. Here, $\pay_i(\rho)  \in \{0,1\}$, the payoff of agent~$i$, is defined by $\pay_i(\rho) =1 \Lra \rho \in \obj_i$. We say agent~$i$ wins the play $\rho$ if her payoff is $1$ and she loses $\rho$ otherwise. For a given state $s$ in $\cG$, we write $\pay(s,\bar{\sigma})$ (respectively, $\pay_i(s,\bar{\sigma})$) for the payoff (respectively, payoff of agent~$i$) for the unique play $\rho$ that is the outcome of the strategy profile $\bar{\sigma}$ starting from state $s$.

A two-player concurrent game is \emph{zero-sum} if for any play $\rho$, Player~$1$ wins $\rho$ if, and only if, Player~$2$ loses it. This is a purely adversarial setting, where one player loses if the other wins and there are no ties. In a zero-sum game, we are interested in finding a winning strategy for a player from a given state if it exists. Such a strategy allows the player to win no matter how the other player moves. In non-zero-sum games, where each player has her own objective and is not necessarily trying to play spoilsport, winning strategies are too restrictive and seldom exist. Instead, the notion of an equilibrium, a strategy profile that is best possible for each player
given the strategies of the other players in the profile, is the key concept. The most celebrated equilibrium in the literature is that of Nash equilibrium, defined below.

%

\paragraph{\bf Nash Equilibrium and Relevant Equilibria}

The solution concept for games we consider in this paper is the Nash equilibrium~\cite{nash1950equilibrium}, a strategy profile in which no agent has the incentive to unilaterally change her strategy.

\begin{defn}
Let $\cG$ be a concurrent game and let $s$ be a state of $\cG$. A strategy profile $\bar{\sigma}$ is a \emph{Nash equilibrium} (NE) of $\cG$ from $s$ if for every agent~$i$ and every strategy $\sigma_i'$ of agent~$i$, it is the case that 
$\pay_i(s,\bar{\sigma}) \geq \pay_i(s,\langle\bar{\sigma}_{-i},\sigma_i' \rangle)$.
\end{defn}

In the kind of games we consider, the payoffs of agents for a given play $\rho$ only depend on the set of states that are visited and the set of states that are visited infinitely often in $\rho$. For such games, the outcomes of Nash equilibria can be taken to be ultimately periodic sequences as shown in \cite{bouyer2015pure} (Proposition 3.1). We restate the result here.

\begin{prop}
\label{prop:lasso}
Suppose $\cG$ is a concurrent game where for any pair $\rho,\rho'$ of plays, $\occ(\rho) = \occ(\rho')$ and $\inf(\rho) = \inf(\rho')$ imply $\pay(\rho)=\pay(\rho')$. If $\rho$ is an outcome of an NE in $\cG$ then there is an NE with outcome $\rho'$ of the form $\alpha_1.\alpha_2^\omega$ such that $\pay(\rho)=\pay(\rho')$, where the lengths $|\alpha_1|$ and $|\alpha_2|$ have an upper bound of $|\state|^2$.
\end{prop}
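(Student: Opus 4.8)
The plan is to prove the statement in two stages. First I would extract from $\rho$ a lasso-shaped play $\rho'=\alpha_1\alpha_2^\omega$ satisfying $\occ(\rho')=\occ(\rho)$ and $\inf(\rho')=\inf(\rho)$; the hypothesis then yields $\pay(\rho')=\pay(\rho)$ for free. Second, and this is where the real work lies, I would show that $\rho'$ is again the outcome of some NE. Throughout, write $S=\occ(\rho)$ and $T=\inf(\rho)$, so that $T\subseteq S\subseteq\state$, and fix a state $t_0\in T$ at which the loop will close.

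For the cycle $\alpha_2$ I would work in the tail of $\rho$ beyond the last visit to a state outside $T$; from there on every visited state lies in $T$ and each state of $T$ recurs. Enumerating $T=\{t_0,t_1,\dots,t_{m-1}\}$, for each $j$ I take a segment of the tail from an occurrence of $t_j$ to a later occurrence of $t_{j+1}$ (indices mod $m$) and delete every loop to obtain a simple path inside $T$ of length at most $|T|-1$; since deleting a loop only splices together two genuine transitions of $\rho$, each step remains legal. Concatenating these $m$ paths gives a closed walk at $t_0$ visiting exactly $T$, of length at most $|T|(|T|-1)\le|\state|^2$. For the prefix $\alpha_1$ I would take the prefix of $\rho$ up to the position by which all of $S$ has first appeared, then extended to the next occurrence of $t_0$; marking the $|S|$ first-occurrence positions and replacing each segment between consecutive marks (and the final segment to $t_0$) by a loop-free simple path preserves the marked states and the endpoints, producing a walk from $\rho[0]$ to $t_0$ that visits exactly $S$ and has length at most $|S|(|\state|-1)\le|\state|^2$. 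Hence $\rho'=\alpha_1\alpha_2^\omega$ is a genuine play with $\occ(\rho')=S$, $\inf(\rho')=T$, and the two claimed length bounds.

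To turn $\rho'$ into an NE outcome I would invoke the punishment-based characterisation of NE outcomes from \cite{bouyer2015pure}: the agents agree to follow $\rho'$, and the instant the observed successor deviates from $\rho'$ they switch to a strategy that punishes the set $\susp$ of agents that could have caused the deviation, preventing every such agent from meeting her objective. The ability to punish a losing agent $i$ from a state $s$ is governed by whether $s$ lies in the coalition $\agt\setminus\{i\}$'s winning region of the zero-sum game for $\obj_i$, which is a property of $s$ alone and independent of the play through it. Since $\bar{\sigma}$ is an NE with outcome $\rho$, every state of $\occ(\rho)$ is punishing for every agent losing in $\rho$; because $\occ(\rho')=\occ(\rho)$ and $\pay(\rho')=\pay(\rho)$ (so exactly the same agents lose), every state visited by $\rho'$ is punishing for every agent losing in $\rho'$. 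Transplanting at each position of $\rho'$ the coalition's punishment from a matching occurrence of the same state in $\rho$ then gives a profile $\bar{\sigma}'$ with outcome $\rho'$ in which no losing agent can profitably deviate.

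The step I expect to be the main obstacle is precisely the concurrency bookkeeping in this last construction: unlike in turn-based games, a single synchronous deviation need not reveal \emph{who} deviated, so the coalition must neutralise the whole suspect set at once, and the suspect set attached to a position of $\rho'$ is determined by the move prescribed \emph{along $\rho'$}, which differs from that along $\rho$. I would discharge this by verifying that the relevant suspect sets arising along $\rho'$ are contained in sets that $\bar{\sigma}$ already neutralises at the corresponding states of $\rho$ — equivalently, that the per-state punishing condition used above is exactly the condition isolated by the suspect-game analysis of \cite{bouyer2015pure}, so that it transfers verbatim once $\occ$ and the payoff are preserved. Checking this compatibility, rather than the lasso extraction, is the delicate part of the argument.
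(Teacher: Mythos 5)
First, a remark on the comparison baseline: the paper does not prove this proposition at all --- it is imported verbatim from \cite{bouyer2015pure} (their Proposition~3.1) with only a citation --- so your attempt is really being measured against the proof in that reference. Your first stage is fine and matches the standard argument: building $\alpha_1$ from the first-occurrence positions of $\occ(\rho)$ with loop-free connecting paths, and $\alpha_2$ from loop-free paths linking the states of $\infi(\rho)$ inside the tail of $\rho$, does yield a legal lasso with the same $\occ$ and $\infi$ and the claimed $|\state|^2$ bounds.

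The gap is in the equilibrium-transfer stage, specifically in the claim that the ability to punish a losing agent $i$ from a state $s$ is ``a property of $s$ alone and independent of the play through it,'' from which you infer that every state of $\occ(\rho)$ is punishing for every loser of $\rho$. That inference fails for prefix-dependent payoffs, which are squarely within the hypothesis (any payoff determined by $\occ(\rho)$, e.g.\ safety, qualifies). Concretely, if agent $i$'s safety set $F_i$ is already hit by the prefix $\rho[0,j]$, the NE condition at position $j$ imposes \emph{no} requirement that the coalition can force a visit to $F_i$ starting from the state $\rho[j]$; a state of $\occ(\rho)$ that $\rho$ only visits \emph{after} entering $F_i$ therefore need not be punishing for $i$, and if $\rho'$ visited that state \emph{before} entering $F_i$, the transplanted trigger strategy would not deter $i$. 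The same issue shows that condition~2(c) of the suspect-game theorem is not a per-state winning-region condition in general: the plays it quantifies over carry the prefix $\rho[0,i]$, so what Eve must achieve from a position depends on which states have already been seen. The repair is to strengthen the position correspondence: match each position $k$ of $\rho'$ to a position $j_k$ of $\rho$ with the same current state, the same prescribed move (hence identical suspect sets), \emph{and} the same visited set $\occ(\rho'[0,k])=\occ(\rho[0,j_k])$; then the punishment guarantee transfers because the winning condition of the combined play depends only on these data together with the suffix. Your first-occurrence construction of $\alpha_1$ actually supplies such a correspondence (up to any point in the $l$-th segment, $\rho'$ has visited exactly the first $l$ states of $S$ in order of first appearance in $\rho$, and every position of $\alpha_2$ matches a tail position of $\rho$ where all of $S$ has been seen), but your written justification never invokes this invariant, and the per-state punishability claim you rely on instead is the step that breaks.
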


In concurrent games, pure strategy Nash equilibria may not exist, or multiple
equilibria may exist. We identify the equilibria that are desirable and call
them \emph{relevant} after \cite{brihaye2021relevant}. For example, an equilibrium in which the number of players who meet their objectives is maximum among all equilibria (\ie, one that maximises the \emph{social welfare} defined below) would be considered a relevant one. 

The relevant equilibria we consider are based on social welfare and Pareto
optimality. Given a play $\rho$ in a concurrent
game $\langle \cG, (\obj)_i \rangle$, we denote by $\win(\rho)$ the set of
agents who meet their objective along $\rho$, \ie, $\win(\rho) = \{ i \in \agt
\mid \rho \in \obj_i\}$. The \emph{social welfare} $\sw(\rho)$ of $\rho$ is
$|\win(\rho)|$, \ie, the number of agents who meet their objectives in
$\rho$. For a given state $s$ in $\cG$ and a strategy profile $\bar{\sigma}$,
we write $\sw(s,\bar{\sigma})$ for $\sw(\rho)$ where $\rho = \out(s,\bar{\sigma})$.

To define Pareto optimality in a game, let $P = \{\langle w_i(\rho) \rangle_{i \in \agt} \mid \rho \in \plays(\cG)\}$, where $w_i(\rho)=1$ if agent~$i$ wins 
$\rho$ and $0$ otherwise, \ie, the set $P$ is the set of winner profiles for all plays in $\cG$. Then a winner profile $p \in P$ is \emph{Pareto optimal} if it is maximal in $P$ with respect to the componentwise ordering $\leq_P$ on $P$ where $0 < 1$ in each component. This means no other agent can be added to the set of winners in $p$, \ie, $p$ represents a maximal set of winners along any play.

%

\subsection{Rational Synthesis Problems}
\label{subsec:probs}

The rational synthesis problem generalises the synthesis problem to multiagent
systems. The aim is to synthesise a game-theoretic equilibrium, a
Nash equilibrium in our case, that satisfies additional desirable properties.
The problems we study are detailed below.

%

For completeness, we start by referring to the \emph{Constrained NE Problem}
solved by Bouyer \etal~\cite{bouyer2015pure}. This problem asks whether there is an NE in a given game whose payoff profile satisfies both an upper and a lower threshold.\footnote{The original formulation in \cite{bouyer2015pure} was stated in terms of a preference relation $\lesssim_A$ for each agent $A$ as follows: Given two plays $\rho_A^\ell$ and $\rho_A^u$ for each agent~$A$, is there a Nash equilibrium $\bar{\sigma}$ which satisfies $\rho_A^\ell \lesssim_A \out(\bar{\sigma}) \lesssim_A \rho_A^u$ for all $A \in \agt$? Since for $\omega$-regular objectives a threshold play $\rho$ is encoded by the pair $(\occ(\rho),\infi(\rho))$ in \cite{bouyer2015pure}, the way we state the problem is polynomially equivalent.}

\begin{prob}[\bf Constrained NE Existence Problem] Given a game $\cG$, a state $s$ in $\cG$ and 
threshold tuples $\mathbf{v},\mathbf{u} \!\in\! \B^n$, decide whether there
exists an NE $\bar{\sigma}$ such that $\mathbf{v} \leq (\pay(s,\bar{\sigma})) \leq \mathbf{u}$,
where the ordering is defined componentwise.
\end{prob}

Bouyer \etal\ showed that the problem is $\p$-complete for B{\"u}chi objectives,
$\np$-complete for safety, reachability and co-B{\"u}chi objectives, and
$\pspace$-complete for Muller objectives; the complexity class for parity
objectives is $\p^{\np}_\parallel$-complete, where $\p^{\np}_\parallel$ is the class of problems
that can be solved by a deterministic Turing machine in polynomial time with an
access to an oracle for solving $\np$ problems and such that the oracle can be queried only 
once with a set of queries.

We now come to the relevant equilibrium problems that we explore in this paper.
First, we define the social welfare problem as placing a lower bound on the social
welfare, \ie, the number of winning players, of an equilibrium.

\begin{prob}[\bf Social Welfare Decision Problem ($\swdp$)]
Given a game $\cG$, a state $s$ in $\cG$ and a threshold value $v \in \N$, decide whether \emph{there exists} an NE $\bar{\sigma}$ such that $\sw(s,\bar{\sigma}) \geq v$.
\end{prob}

The second problem we consider is the Pareto optimality decision problem for rational synthesis. 

\begin{prob}[\bf Pareto Optimal Decision Problem ($\podp$)]
Given a game $\cG$ and a state $s$ in $\cG$, decide whether \emph{there exists} an NE $\bar{\sigma}$ such that $\pay(s,\bar{\sigma})$ is Pareto optimal.
\end{prob}

Notice that the problems above are about the \emph{existence} of NE satisfying
some conditions. Hence, it is a \emph{cooperative} setting in which we assume that all agents will cooperate when presented with an NE.

\subsection{The Suspect Game}
\label{ubsec:susp}
Many results from Bouyer \etal~\cite{bouyer2015pure} that we use below rely on  a key construction. The idea is based on a correspondence between Nash equilibria in a concurrent game $\cG$ and winning strategies in a two-player zero-sum game $\cH$ derived from $\cG$, called the suspect game. The game $\cH$ is played between \eve\ and \adam. Intuitively, \eve's goal is to prove that the sequence of moves proposed by her results from a Nash equilibrium in $\cG$, while \adam's task is to foil her attempt by exhibiting that some agent has a profitable deviation from the strategy suggested by \eve. Here we recall the basic definitions and results from \cite{bouyer2015pure}.

Given two states $s$ and $s'$ and a move $\bar{a}$ in a concurrent game $\cG$, the set of \emph{suspect agents} for $(s,s')$ and $\bar{a}$ is the set
\[ \susp((s,s'),\bar{a}) = \{ i \in \agt \mid \exists a' \in \avb(s,i).\  
\tr(s,\langle\bar{a}_{-i},a' \rangle) = s' \}. \]
Note that if $\tr(s,\bar{a})=s'$ then $\susp((s,s'),\bar{a}) = \agt$, \ie, every agent is suspect if there is no deviation from the suggested move.
For a play $\rho$ and a strategy profile $\bar{\sigma}$, the set of suspect agents for $\rho$ and $\bar{\sigma}$ is given by the set of suspect agents along each transition of $\rho$:
\[ \susp(\rho,\bar{\sigma}) = \{ i \in \agt \mid \forall j \in \N.\ 
i \in \susp( (\rho[j],\rho[j+1]),\bar{\sigma}(\rho[0,j]) \}. \]
The idea is that agent~$i$ is a suspect for a pair $(s,s')$ and move $\bar{a}$ if she can unilaterally deviate from her action $a_i$ in $\bar{a}$ to trigger the transition $(s,s')$. It follows from the above definitions that agent~$i$ is in $\susp(\rho,\bar{\sigma})$ if, and only if, there is a strategy $\sigma'$ for agent~$i$ such that $\out(\langle\bar{\sigma}_{-i},\sigma' \rangle) = 
\rho$.

For a fixed play $\pi$ in a concurrent game $\cG$, we build the \emph{suspect game} $\cH(\cG,\pi)$, a two-player turn-based zero-sum game between \eve\ and \adam\ as follows. The set of states of $\cH(\cG,\pi)$ is the disjoint union of the states $V_\exists \subseteq \state \times 2^\agt$ owned by \eve, and the set $V_\forall \subset \state \times 2^\agt \times \act^\agt$ owned by \adam. The game proceeds in the following way: from a state $(s,P)$ in $V_\exists$, \eve\ chooses a legal move $\bar{a}$ from $s$ in $\cG$, resulting in the new state $(s,P,\bar{a})$ in $V_\forall$. \adam\ then chooses a move $\bar{a}'$ that will actually apply in $\cG$ leading to a state $s'$ in $\state$; the resulting state in $\cH(\cG,\pi)$ is $(s',P \cap \susp((s,s'),\bar{a}))$. In the special case when the state chosen by \adam\ is such that $s' = \tr(s,\bar{a})$, we say that \adam\ obeys \eve. In this case, the new state is given by $(s',P)$.

We define the two projections $\proj_1$ and $\proj_2$ from $V_\exists$ to $\state$ and $2^\agt$, respectively, by $\proj_1(s,P)=s$ and $\proj_2(s,P)=P$. These projections are extended to plays in $\cH(\cG,\pi)$ in a natural way but only using \eve's states -- for example, $\proj_1((s_0,P_0)(s_0,P_0,\bar{a})(s_1,P_1))\cdots = s_0s_1\cdots$. For any play $\rho$ in $\cH(\cG,\pi)$, $\proj_2(\rho)$, which is a sequence of sets of agents in $\cG$, is non-increasing, and hence its limit $\lambda(\rho)$ is well defined. Note that if $\lambda(\rho) \neq \emptyset$ then $\proj_1(\rho)$ is a play in $\cG$. A play $\rho$ is winning for \eve, if for all $i \in \lambda(\rho)$ the play $\pi$ is as good as or better than $\proj_1(\rho)$ for agent~$i$ in $\cG$, \ie., $\pay_i(\pi) \geq \pay_i(\proj_1(\rho))$. The \emph{winning region} $W(\cG,\pi)$ is the set of states of $\cH(\cG,\pi)$ from which \eve\ has a winning strategy.

The correctness of the suspect game construction is captured by the following theorem from \cite{bouyer2015pure}.

\begin{theorem}
Let $\cG$ be a concurrent game, $s$ a state of $\cG$ and $\pi$ a play in $\cG$. Then the following two conditions are equivalent.
\be
\item There is an NE $\bar{\sigma}$ from $s$ in $\cG$ whose outcome is $\pi$.
\item There is a play $\rho$ from $(s,\agt)$ in $\cH(\cG,\pi)$ satisfying
  \be
  \item \adam\ obeys \eve\ along $\rho$,
  \item $\proj_1(\rho)=\pi$, and
  \item for all $i \in \N$, there is a strategy $\sigma_\exists^i$ for \eve, for which any play in $\rho[0,i]\cdot\out(\rho[i],\sigma_\exists^i)$ is winning for \eve.
  \ee
\ee
\end{theorem}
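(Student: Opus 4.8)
The statement is an equivalence, so the plan is to prove both implications, using as the common bridge the characterisation recalled just above: agent~$i$ lies in $\susp(\rho,\bar{\sigma})$ precisely when some unilateral deviation $\sigma'$ of agent~$i$ satisfies $\out(\langle\bar{\sigma}_{-i},\sigma'\rangle)=\rho$. This is what lets one translate ``agent~$i$ stays a suspect forever along a play'' into ``agent~$i$ has a deviation realising that play,'' which is exactly the quantity the Nash condition controls.

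For $1\Rightarrow 2$, I would let \eve\ propose, at each step, the move prescribed by $\bar{\sigma}$ on the history read off by $\proj_1$, and let \adam\ obey throughout. Since obeying keeps the suspect component equal to $\agt$ and reproduces $\pi$ state by state, conditions (a) and (b) are immediate and $\lambda(\rho)=\agt$. For (c) I would use the \emph{same} \eve-strategy $\sigma_\exists$ (``keep playing $\bar{\sigma}$'') from every position $\rho[i]$. Given any play $\rho'$ in $\rho[0,i]\cdot\out(\rho[i],\sigma_\exists)$, either $\lambda(\rho')=\emptyset$, in which case \eve\ wins vacuously, or for each $k\in\lambda(\rho')$ the suspect characterisation supplies a deviation of agent~$k$ against $\bar{\sigma}_{-k}$ whose outcome is $\proj_1(\rho')$; the Nash inequality then yields $\pay_k(\pi)\ge\pay_k(\proj_1(\rho'))$, which is precisely the winning condition.

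For $2\Rightarrow 1$, I would build $\bar{\sigma}$ as a trigger profile: along prefixes of $\pi$ every agent plays the move that \eve\ proposes on $\rho$ (well defined by (a) and (b)), and at the first moment the actual outcome leaves $\pi$ the remaining agents switch to the moves dictated by \eve's winning strategy. Concretely, if the departure occurs after step $m$, I would read the punishment off $\sigma_\exists^{i}$ for the index $i$ with $\rho[i]$ the \adam-state $(\pi[m],\agt,\bar{a}_m)$: since the proposed move $\bar{a}_m$ is already fixed inside that state, the off-$\pi$ choice is a legal \adam-continuation of $\rho[0,i]$, and (c) certifies that every resulting play is winning for \eve. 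The outcome of $\bar{\sigma}$ is $\pi$ by construction, so it remains to verify the Nash condition. Consider a lone deviation $\sigma'_k$ of an agent~$k$; if its outcome never leaves $\pi$ the payoff is $\pay_k(\pi)$, and otherwise, from the first departure onward only agent~$k$'s action differs from the proposed move, so $k$ stays a suspect at every transition and hence $k\in\lambda(\rho')$ for the play $\rho'$ induced in $\cH$. As $\rho'$ is winning for \eve\ by (c), $\pay_k(\pi)\ge\pay_k(\proj_1(\rho'))$, so agent~$k$ gains nothing.

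The main obstacle is the bookkeeping in $2\Rightarrow 1$: making the correspondence between a deviation in $\cG$ and an off-obeying \adam-move in $\cH$ exact, checking that the unique deviator never drops out of the suspect set (so that the winning condition binds \emph{that agent's} payoff rather than some vacuous index), and stitching the family $\{\sigma_\exists^i\}_i$ into one coherent profile by branching at the \adam-state of the first departure. By contrast $1\Rightarrow 2$ is essentially routine once the suspect characterisation is invoked.
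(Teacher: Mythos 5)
The paper does not prove this theorem itself; it quotes it verbatim from Bouyer et al.\ \cite{bouyer2015pure}, and your argument is a correct reconstruction of the proof given there: \eve\ simulating the equilibrium with \adam\ obeying for $1\Rightarrow 2$, and the trigger profile that punishes the first deviation using $\sigma_\exists^i$, with the observation that a lone deviator remains in the suspect set forever, for $2\Rightarrow 1$. No gaps; this is essentially the same approach as the cited source.
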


The following theorem from \cite{bouyer2015pure} asserts that not only is the suspect game construction correct, but it does not result in an exponential blow-up in size as well. Note that the infinite play $\pi$ is specified by the pair of finite sets $(\occ(\pi),\infi(\pi))$.

\begin{theorem}
  Let $\cG$ be a concurrent game and $\pi$ a play in $\cG$. The number of reachable states from $\state \times \agt$ in $\cH(\cG,\pi)$ is polynomial in the size of $\cG$.
\end{theorem}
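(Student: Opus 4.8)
The plan is to bound separately the reachable \eve-states (those in $V_\exists$) and \adam-states (those in $V_\forall$). Since every reachable \adam-state has the form $(s,P,\bar a)$ with $(s,P)$ a reachable \eve-state and $\bar a$ a legal move at $s$, any bound $N$ on the number of reachable \eve-states immediately yields a bound of $N\cdot\max_s\prod_{i\in\agt}|\avb(s,i)|$ on the reachable \adam-states; as $\max_s\prod_i|\avb(s,i)|\le|\tr|$, this is polynomial in $|\tr|$ whenever $N$ is. So the problem reduces to bounding the reachable \eve-states $(s,P)$. The first component contributes only a factor $|\state|$, and the only genuine source of blow-up is the component $P\in 2^\agt$, for which there are a priori $2^{|\agt|}$ possibilities, exponential in the number of agents. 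The main obstacle, and the crux of the whole argument, is to show that only polynomially many subsets $P$ actually occur on states reachable from the initial states $(s,\agt)$.

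To control this, call an agent $i$ \emph{active} at a state $s$ if $|\avb(s,i)|\ge 2$, and let $A_s$ denote the set of agents active at $s$. The key observation I would establish is that along any play of $\cH(\cG,\pi)$ starting from some $(s_0,\agt)$, the component $P$ stays equal to $\agt$ for as long as \adam\ obeys \eve, and that at the \emph{first} transition where \adam\ deviates, say out of a state $s_1$ under \eve's proposed move $\bar a$ to a target $s'\neq\tr(s_1,\bar a)$, the resulting set is $P=\agt\cap\susp((s_1,s'),\bar a)=\susp((s_1,s'),\bar a)\subseteq A_{s_1}$. The inclusion holds because an inactive agent $i$ has a single available action at $s_1$, so her only way to realise the move into $s'$ would be to play \eve's own action, which lands on $\tr(s_1,\bar a)\neq s'$; hence an inactive agent can never be a suspect at a genuine deviation. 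Because $\proj_2$ is non-increasing along any play (as already noted in the text), every set reached after this first deviation still satisfies $P\subseteq A_{s_1}$.

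Consequently every reachable set $P\neq\agt$ is a subset of $A_{s_1}$ for some state $s_1$, so the number of distinct such sets is at most $\sum_{s_1\in\state}2^{|A_{s_1}|}$. The final step is the accounting inequality $2^{|A_{s_1}|}\le\prod_{i\in\agt}|\avb(s_1,i)|$, which holds since the factors for inactive agents equal $1$ and those for active agents are at least $2$; the right-hand side is precisely the summand of $|\tr|$ attached to $s_1$. Summing over $s_1$ bounds the number of distinct reachable $P\neq\agt$ by $|\tr|$. Adding the $|\state|$ states with $P=\agt$ and multiplying by the $|\state|$ choices of first component gives $O(|\state|\cdot|\tr|)$ reachable \eve-states, and hence $O(|\state|\cdot|\tr|^2)$ reachable states in total, which is polynomial in $|\cG|$.

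I expect the delicate point to be the careful justification that inactive agents necessarily drop out of the suspect set at the first deviation, together with phrasing the reduction so that the transition table $\tr$ — which the Remark notes can be exponential in the number of agents — is correctly treated as part of the input size $|\cG|$; the rest of the argument is bookkeeping around the inequality $2^{|A_s|}\le\prod_i|\avb(s,i)|$.
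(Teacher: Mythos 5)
Your proof is correct. Be aware that the paper itself gives no proof of this theorem --- it is quoted from \cite{bouyer2015pure} --- so the comparison is against that reference rather than an in-paper argument. Your key step (an agent with a single available action at $t$ can never be a suspect for a genuine deviation out of $t$, hence every reachable $P \neq \agt$ lies in the power set of the set $A_t$ of active agents at the first-deviation state, and $2^{|A_t|} \le \prod_{i}|\avb(t,i)|$ charges all of these sets to the explicitly represented transition table) is exactly the right observation and matches the counting idea underlying the original result; the bookkeeping for \adam's states and the $O(|\state|\cdot|\tr|^2)$ total are also fine.
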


Note that for two plays $\pi$ and $\pi'$ in a concurrent game $\cG$ the suspect games
$\cH(\cG,\pi)$ and $\cH(\cG,\pi')$ have identical CGSs and differ only in the winning conditions. Thus, the CGS $\cJ(\cG)$ of $\cH(\cG,\pi)$ depends solely on $\cG$ and is polynomial in size. Also, if we denote the set of losing agents in $\pi$ by $\mathrm{Los}(\pi)$, the winning condition for \eve\ in $\cH(\cG,\pi)$ can be stated as follows: for every $i \in \lambda(\rho) \cap \mathrm{Los}(\pi)$, agent~$i$ loses $\proj_1(\rho)$ in $\cG$. Thus, the winning condition depends only on $\mathrm{Los}(\pi)$ and not the exact sequence $\pi$. Henceforth we denote the suspect game by $\cH(\cG,L)$, where $L \subseteq \agt$ and \eve\ wins the play $\rho$ if for every $i \in \lambda(\rho) \cap L$, agent~$i$ loses the play $\proj_1(\rho)$ in $\cG$.

\section{Relevant Equilibria for Omega-regular Concurrent Games}
\label{sec:qual}

\subsection{Reachability Games}
\label{subsec:reach}

\paragraph{\bf Social Welfare Problem}
We begin by showing that $\swdp$, the social welfare decision problem, is
$\np$-complete for reachability objectives. The reduction from $\sat$ is based
on a construction by Bouyer \etal~\cite{bouyer2015pure} for the
constrained NE existence problem for turn-based games.

\begin{figure}[t]
\begin{center}  
\caption{Reduction from $\sat$ to $\swdp$: Reachability}
\vspace{0.5cm}
\label{fig:swdp-reach}
\begin{tikzpicture}
\node[state, initial] (x1) {$ $};
\node[state, above right of=x1] (1x1) {${x_1}$};
\node[state, below right of=x1] (0x1) {${\neg x_1}$};
\node[state, below right of=1x1] (x2) {$ $};
\node[state, above right of=x2] (1x2) {${x_2}$};
\node[state, below right of=x2] (0x2) {${\neg x_2}$};
\node[state, below right of=1x2] (x3) {$ $};
\node[state, draw=none] (qdots) [above right=of x3] {$\cdots$}; 
\node[state, draw=none] (rdots) [below right=of x3] {$\cdots$}; 
\node[state, below right=of qdots] (xn) {$ $};
\node[state, above right of=xn] (1xn) {${x_n}$};
\node[state, below right of=xn] (0xn) {${\neg x_n}$};
\node[state, below right of=1xn] (C1) {$ $};

\draw 
 (x1) edge (1x1)
 (x1) edge (0x1)
 (1x1) edge (x2)
 (0x1) edge (x2)
 (x2) edge (1x2)
 (x2) edge (0x2)
 (1x2) edge (x3)
 (0x2) edge (x3)
 (x3) edge (qdots)
 (x3) edge (rdots)
 (qdots) edge (xn)
 (rdots) edge (xn)
 (xn) edge (1xn)
 (xn) edge (0xn)
 (1xn) edge (C1)
 (0xn) edge (C1)
 (C1) edge[loop right] (C1);
\end{tikzpicture}
\end{center}        
\end{figure}
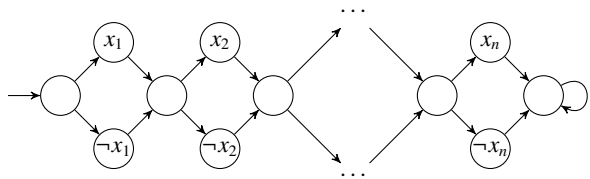

\begin{lem}
\label{lem:swdp-reach}
$\swdp$ for reachability objectives is $\np$-complete.
\end{lem}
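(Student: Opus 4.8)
The plan is to prove the two directions, membership in $\np$ and $\np$-hardness, separately.

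For membership, I would use Proposition~\ref{prop:lasso} to restrict attention to ultimately periodic outcomes. A nondeterministic algorithm guesses a candidate lasso outcome $\pi = \alpha_1\alpha_2^\omega$ with $|\alpha_1|,|\alpha_2| \le |\state|^2$ together with the set $\win(\pi)$ of agents it satisfies. Since for reachability the payoff of $\pi$ is determined by $\occ(\pi)$, which is finite and directly readable off the lasso, verifying $\win(\pi)$ and testing $\sw(\pi)=|\win(\pi)| \ge v$ is immediate; and by Proposition~\ref{prop:lasso} passing to the lasso preserves the payoff, so this restriction is without loss of generality. The only remaining task is to certify that $\pi$ is realisable as an NE outcome, and this is exactly what the suspect game decides: by the suspect game correctness theorem, $\pi$ is the outcome of an NE from $s$ iff, starting from $(s,\agt)$ in $\cH(\cG,\mathrm{Los}(\pi))$ along the obedient play tracing $\pi$, \eve\ can punish every deviation, i.e. keep the residual winning condition satisfied. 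For reachability objectives \eve's winning condition ``every surviving suspect $i \in \mathrm{Los}(\pi)$ must lose $\proj_1(\rho)$'' says that the relevant target sets are never visited, which is a safety condition on the polynomial-size CGS $\cJ(\cG)$. Hence this step is a polynomial-time safety-game computation, the whole verification is polynomial, and $\swdp \in \np$. (Equivalently, one may guess a winner set $W$ with $|W| \ge v$ and invoke the $\np$ membership of the Constrained NE Existence Problem from \cite{bouyer2015pure}.)

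For hardness, I would reduce from $\sat$ using the turn-based game sketched in Figure~\ref{fig:swdp-reach}. Given a CNF formula $\phi$ over $x_1,\dots,x_n$ with clauses $C_1,\dots,C_m$, a play threads one ``diamond'' gadget per variable, passing at gadget $i$ through either the state $x_i$ or the state $\neg x_i$; these choice states are owned by a single assignment player whose reachability target is the terminal sink, so it reaches its target along \emph{every} play and is therefore always a (indifferent) winner. Because a path visits exactly one of $\{x_i,\neg x_i\}$ per variable, each play encodes a consistent truth assignment. I then add one clause player per $C_j$ whose reachability objective is the set of literal states occurring in $C_j$, so that it wins precisely when the encoded assignment satisfies $C_j$. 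Since the only player with a genuine choice always wins and no other player owns a state where it could deviate, every play from the initial state is an NE, and its social welfare equals $1$ (the assignment player) plus the number of satisfied clauses.

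It then suffices to take the threshold $v = m+1$: an NE with $\sw(s,\bar\sigma) \ge m+1$ exists iff some assignment satisfies all $m$ clauses, i.e. iff $\phi$ is satisfiable. The reduction is plainly polynomial, giving $\np$-hardness and, with the first part, $\np$-completeness.

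The step I expect to be the main obstacle is getting the hardness reduction's bookkeeping exactly right: ensuring the all-clauses-satisfied play is \emph{genuinely} a Nash equilibrium and that social welfare tracks the number of satisfied clauses precisely. The clean device is to force every player with a real choice to be an indifferent guaranteed winner (by making its reachability target unavoidable), so that the equilibrium constraint becomes vacuous and $\swdp$ faithfully reduces to ``there is an assignment satisfying at least $v-1$ clauses''. I would also double-check that distributing the choices among $n$ per-variable players (as the diamond picture may suggest) preserves this always-winning property and only shifts the threshold by a constant, and — on the membership side — confirm that for reachability the suspect game's winning condition really is a safety condition on $\cJ(\cG)$, so that it is solvable in polynomial time.
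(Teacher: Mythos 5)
Your proposal is correct and follows essentially the same route as the paper: $\np$-hardness via the same $\sat$ reduction (a single always-winning player owning all choice states plus one clause player per clause, threshold $m+1$), and membership via guessing a lasso outcome with at least $v$ winners and certifying it through the suspect game reduced to a polynomial-size safety game, exactly as in the adaptation of Bouyer et al.'s algorithm used in the paper.
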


\begin{proof}
We reduce $\sat$ to $\swdp$ with reachability objectives. Let $\phi = C_1 \wedge
\ldots \wedge C_m$ be an instance of $\sat$ with $C_i = \ell_{i,1} \vee
\ell_{i,2} \vee \ell_{i,3}$ over a set of variables $\{x_1,\ldots,x_n\}$. Then
construct the $(m+1)$-player turn-based reachability game as shown in
Fig.~\ref{fig:swdp-reach}. In this game Player~$0$ owns all the states, shown as
circles in the figure. The target set for Player~$0$ includes all the states,
\ie, Player~$0$ always wins. The threshold value $v$ for the social welfare
function in the $\swdp$ instance is set to $m+1$, the number of players.
Clearly, the game has an NE with a payoff of $1$ for each player if, and only if,
$\phi$ is satisfiable. If $\tau$ is a satisfying valuation for $\phi$, the
strategy for Player~$0$ is simple: between $x_k$ and $\neg x_k$ choose $x_k$ if $\tau(x_k)=1$ and $\neg x_k$ otherwise. Conversely, if all the players win, a satisfying valuation $\tau$ can
be similarly obtained from Player~$0$'s strategy, by setting $\tau(x_k)$ to $1$ if Player~$0$ chooses to move to $x_k$ and  to $0$ otherwise.

We show that $\swdp$ for reachability objectives is in
$\np$ by mirroring the Algorithm in Section~5.1.2 in \cite{bouyer2015pure}
with a small modification for adapting the solution for the constrained NE existence problem to
$\swdp$. The algorithm shown below makes essential use of the suspect game $\cH(\cG,L)$ and its reduction
to the safety game $\cJ(\cG)$ with safety objective $\Omega_L$. See \cite{bouyer2015pure} for the technical details.

\be
\item Given a value $v \in [0,n]$, first guess a lasso-shaped play $\rho = \alpha_1 \cdot \alpha_2^\omega$ where $|\alpha_i|^2 \leq 2|\state|^2$
in $\cJ(\cG)$ such that \adam\ obeys \eve\ along $\rho$, and the play $\pi = \proj_1(\rho)$ in $\cG$ satisfies the constraint that at least $v$ players are winning in it. This condition on the number of winning players is the only change from Section~5.1.2 in \cite{bouyer2015pure}.
\item Then compute the set $W(\cG, \mathrm{Los}(\pi))$ of the winning states for \eve\ in the suspect game $\cH(\cG,\mathrm{Los}(\pi))$, where $\mathrm{Los}(\pi)$ is the set of losing players along $\pi$.
\item Finally, check that $\rho$ always stays in $W(\cG, \mathrm{Los}(\pi))$.
\ee
We refer to \cite{bouyer2015pure} for the proof that this nondeterministic algorithm runs in polynomial time.

\end{proof}

\paragraph{\bf Pareto Optimal Decision Problem}

It is clear that the construction in the proof of Lemma~\ref{lem:swdp-reach}
yields an NE that is Pareto-optimal if, and only if, the formula $\phi$ is
satisfiable. Hence the Pareto Optimal Decision Problem for reachability objectives
is $\np$-hard using the same reduction as in the lemma.

\begin{lem}
\label{lem:podp-reach}
$\podp$ for reachability objectives is $\np$-hard.
\end{lem}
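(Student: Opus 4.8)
The plan is to obtain $\np$-hardness by a polynomial-time reduction from $\sat$ that reuses, unchanged, the $(m+1)$-player turn-based reachability game $\cG_\phi$ built in the proof of Lemma~\ref{lem:swdp-reach}. I would then prove that, from its initial state, $\cG_\phi$ admits a Nash equilibrium whose payoff is Pareto optimal if and only if the input formula $\phi$ is satisfiable; since the map $\phi \mapsto \cG_\phi$ is already known to be polynomial, this suffices.

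Before treating either direction I would pin down the set of achievable winner profiles. Because Player~$0$ owns every state and her target is all of $\state$, she wins along every play, so each profile in $P$ has a $1$ in its $0$-th coordinate. Because every clause-player is passive (a singleton action set at each state), the plays of $\cG_\phi$ are in bijection with valuations $\tau \in \B^n$, the play for $\tau$ carrying payoff $\langle 1, w_1, \dots, w_m\rangle$ with $w_i = 1$ exactly when $\tau$ satisfies $C_i$; hence $P$ is precisely this set of profiles. The forward direction is then immediate: if $\phi$ is satisfiable via $\tau$, the Player~$0$ strategy of Lemma~\ref{lem:swdp-reach} yields an NE whose outcome makes all $m+1$ players win, i.e.\ has the all-ones payoff. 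Since all-ones is the greatest element of the componentwise ordering $\leq_P$, it is vacuously maximal in $P$ and therefore Pareto optimal, so $\cG_\phi$ has a Pareto-optimal NE.

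The converse is the step I expect to be the main obstacle, and it is where $\podp$ genuinely departs from $\swdp$. Pareto optimality is judged against \emph{all} plays, not merely equilibrium outcomes, so I must show that when $\phi$ is unsatisfiable — equivalently, when all-ones is absent from $P$ — no NE attains a maximal element of $P$. The difficulty is structural rather than computational: in the bare turn-based gadget above every play is \emph{already} a Nash equilibrium, so every maximal profile of $P$ is realised by some equilibrium and a Pareto-optimal NE would exist regardless of satisfiability. The crux of the proof is therefore to break this degeneracy, so that the dominating plays cease to be equilibria precisely when some clause is left unsatisfied.

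Concretely, the reverse direction forces me to enrich the clause-players with a deviating action that lets a player profitably abandon any outcome in which her clause fails, so that the only equilibria whose payoff is maximal in $P$ are the all-ones equilibria guaranteed by a satisfying assignment. The delicate part is to design this deviation so that (i) each player's objective stays a reachability objective, (ii) the deviation is profitable exactly when the corresponding clause is falsified and harmless otherwise, and (iii) the suspect-game characterisation recalled in Section~\ref{ubsec:susp} still certifies exactly the intended equilibria. Once these are in place, the equivalence ``Pareto-optimal NE exists $\iff \phi$ satisfiable'' follows, and with it the claimed $\np$-hardness.
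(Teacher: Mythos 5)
Your forward direction is sound, and your diagnosis of why the unmodified gadget cannot carry the reverse direction is correct: since Player~$0$ owns every state and always wins while the clause players are passive, \emph{every} play of $\cG_\phi$ is the outcome of a Nash equilibrium; the set $P$ of winner profiles is finite and nonempty, so it has a maximal element, and that maximal element is therefore always realised by an equilibrium. Hence $\cG_\phi$ admits a Pareto-optimal NE whether or not $\phi$ is satisfiable, and the map $\phi \mapsto \cG_\phi$ is not a reduction to $\podp$. You should be aware that the paper's own proof of this lemma consists exactly of the one-sentence claim that the construction of Lemma~\ref{lem:swdp-reach} yields a Pareto-optimal NE if, and only if, $\phi$ is satisfiable; under the paper's stated definition of Pareto optimality (maximality in $P$, not in $\B^{m+1}$), your observation refutes the ``only if'' half of that claim. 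So the degeneracy you identify is a real defect of the published argument, not an artefact of your reading.

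That said, your proposal does not close the gap it opens. The entire burden of the reverse direction is delegated to a gadget that you specify only through the three properties (i)--(iii) it should satisfy, and such a gadget is not routine. Giving a clause player an ``escape'' move changes the set $P$ itself (the escape play contributes new winner profiles), so maximality must be re-analysed from scratch; moreover, in a turn-based graph a unilateral deviation by a single player is always available to every other profile as well, so making a dominating play \emph{unsustainable} as an equilibrium typically requires a concurrent punishment component (a matching-pennies-style module in the spirit of the constrained-NE hardness proofs of Bouyer et al.), which leaves the turn-based setting of Lemma~\ref{lem:swdp-reach} entirely. Until the modified game is written down and properties (ii) and (iii) are verified for it, you have a correct critique and a plan, but not a proof of $\np$-hardness.
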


For the upper bound, we show that $\podp$ for reachability objectives is in
the class $\p^\np = \mathsf{\Delta}^p_2$ in the polynomial hierarchy as follows.
First, use binary search for the threshold value $v$ in the range $[0,n]$ and the $\np$ oracle for deciding $\swdp$ in the proof of Lemma~\ref{lem:swdp-buchi} above to determine the maximum value $m$ of $v$ for which the procedure returns yes. Then, to check if there is a run $\rho$ where more than $m$ players are winners, search for any strongly connected component (SCC) $C$ reachable from the initial state $s$ in the underlying CGS that satisfies the following condition: there are more than $m$ sets in $F_1,\ldots,F_n$ with a nonempty intersection with $C$. Since this can be done in polynomial time using Tarjan's algorithm for finding all SCCs in a directed graph~\cite{tarjan1972depth}, the entire procedure runs in $\p^\np$ time. However, we leave the question of whether $\podp$ for reachability objectives is $\p^\np$-hard open.

\subsection{Safety Games}
\label{subsec:safe}

\paragraph{\bf Social Welfare Problem}

We show that $\swdp$ is $\np$-complete for safety objectives by reduction from
$\sat$. We use a modification of a construction by Bouyer
\etal~\cite{bouyer2015pure} for the value problem for ordered B{\"u}chi
objectives with the counting preorder.

\begin{lem}
\label{lem:swdp-safe}
$\swdp$ for safety objectives is $\np$-complete.
\end{lem}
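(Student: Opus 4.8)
The plan is to prove the two directions separately: \np-hardness by a reduction from $\sat$, and membership in \np\ by reusing, almost verbatim, the suspect-game algorithm from the proof of Lemma~\ref{lem:swdp-reach}.

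For hardness I would reduce from $\sat$. Given $\phi = C_1 \wedge \cdots \wedge C_m$ over variables $x_1,\ldots,x_n$, I would build a turn-based safety game with one controller, Player~$0$, and $m$ clause players. Player~$0$ first walks through a variable-selection gadget, choosing between an $x_k$-state and a $\neg x_k$-state for each $k$ and thereby committing to a valuation $\tau$, and then loops forever in a sink. Player~$0$ is given the empty unsafe set, so Player~$0$ always wins; clause player~$j$ is given a safety objective whose unsafe set is entered exactly when $\tau$ falsifies $C_j$. The threshold is set to $v = m+1$. Because the game is turn-based and only Player~$0$ ever has a genuine choice, every strategy profile is trivially a Nash equilibrium (a clause player has no move with which to deviate, and Player~$0$ wins regardless), so the $\swdp$ instance is positive iff some play makes all $m+1$ players win, i.e.\ iff $\phi$ is satisfiable.

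The main obstacle is that a safety objective is an \emph{avoidance} (universal) condition, whereas clause satisfaction is a \emph{disjunction} over literals. One therefore cannot simply mimic the reachability gadget of Lemma~\ref{lem:swdp-reach} by taking the unsafe set of player~$j$ to be the literal-states of $C_j$, since ``the play avoids $F_j$'' has the wrong polarity relative to ``some literal of $C_j$ is visited''. This is exactly where I would borrow the modified gadget underlying Bouyer~\etal's construction for the value problem for ordered B\"uchi objectives under the counting preorder: the assignment is routed so that the falsifying pattern of a clause (all three of its literals set false) forces the play through a dedicated violation state $u_j$, and the counting bookkeeping that there tracks satisfied B\"uchi conditions is dualised to count the \emph{safe} clause players. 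Checking that this routing both (i) makes clause player~$j$ safe precisely when $\tau \models C_j$ and (ii) keeps the encoded valuation consistent is the technical heart of the reduction; once it is in place, the social-welfare count reaches $m+1$ iff every clause is satisfied, so $\sw(s,\bar\sigma) \geq v$ for some NE iff $\phi \in \sat$.

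For membership in \np\ I would argue that, since safety payoffs depend only on $\occ(\rho)$, Proposition~\ref{prop:lasso} applies and every NE outcome can be taken lasso-shaped $\alpha_1\alpha_2^\omega$ with $|\alpha_1|,|\alpha_2|\le|\state|^2$. The nondeterministic algorithm then mirrors that of Lemma~\ref{lem:swdp-reach}: first, guess such a lasso $\rho$ in $\cJ(\cG)$ along which \adam\ obeys \eve\ and whose projection $\pi=\proj_1(\rho)$ satisfies $\sw(\pi)\geq v$ --- the only change from the constrained-NE algorithm of Bouyer~\etal, and checkable in polynomial time because a safety winner is determined by $\occ(\pi)$, which is read off the finite lasso; next, compute the winning region $W(\cG,\mathrm{Los}(\pi))$ of the suspect game $\cH(\cG,\mathrm{Los}(\pi))$, which for these objectives reduces to solving the safety game $\cJ(\cG)$ with objective $\Omega_{\mathrm{Los}(\pi)}$ and is hence polynomial; finally, verify that $\rho$ stays inside $W(\cG,\mathrm{Los}(\pi))$. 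Correctness follows from the suspect-game characterization theorem, and the entire guess-and-check runs in polynomial time, placing $\swdp$ for safety in \np. Together with the reduction above this yields \np-completeness.
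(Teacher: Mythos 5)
Your membership argument is fine: it is essentially the paper's (guess a lasso of quadratic length via Proposition~\ref{prop:lasso}, check the social-welfare constraint on $\proj_1(\rho)$, and verify the NE condition through the suspect game reduced to the safety game $\cJ(\cG)$), differing only in that you phrase the verification as ``$\rho$ stays in $W(\cG,\mathrm{Los}(\pi))$'' rather than checking each deviation point, which is an equivalent reading of the suspect-game theorem.

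The hardness direction, however, has a genuine gap. You correctly identify the polarity obstacle (safety is avoidance, clause satisfaction is a disjunction) and you correctly name the source to borrow from (Bouyer \etal's counting-preorder construction for ordered B\"uchi objectives), but the reduction you then sketch --- one player per clause, threshold $v=m+1$, and a ``dedicated violation state $u_j$'' that the play is forced through exactly when $\tau$ falsifies $C_j$ --- is not that construction and does not work as stated. A polynomial-size game graph cannot route the play through $u_j$ precisely when all three literals of $C_j$ are false: the state of the graph does not remember the valuation chosen in the variable gadget, and Player~$0$, who owns everything, would simply never enter $u_j$. Detecting clause falsification in the graph structure is exactly the step you defer as ``the technical heart,'' and it is the step that fails. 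The actual construction avoids this entirely by making the players correspond to \emph{literals}, not clauses: there are $2n+1$ players, where for each variable $x_k$ the two players $P_{2k-1},P_{2k}$ have unsafe sets $F_{2k-1}=\{x_k\}\cup\{\text{positive occurrences of }x_k\text{ in clauses}\}$ and $F_{2k}=\{\neg x_k\}\cup\{\text{negative occurrences}\}$; the play cycles through the variable gadget and then through one literal state per clause. Any infinite play visits at least $n$ of the $2n$ sets, and it visits \emph{exactly} $n$ of them iff the literal choices in the clause blocks are consistent with the valuation chosen in the variable blocks and every clause contains a true literal. The threshold is therefore $v=n+1$, and satisfiability is equivalent to the existence of a play (hence an NE, since all profiles are NE) with at least $n+1$ winners. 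This counting-over-literal-players argument is what replaces your per-clause violation detection, and without it your reduction is incomplete.
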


\begin{proof}
We reduce $\sat$ to $\swdp$ with safety objectives for a turn-based game.
Consider an instance $\phi = C_1 \wedge \ldots \wedge C_m$ of $\sat$ with $C_j =
\ell_{j,1} \vee \ell_{j,2} \vee \ell_{j,3}$ over a set of variables
$\{x_1,\ldots,x_n\}$. We associate a $(2n+1)$-player turn-based
game $\cG_\phi$ with $\phi$. The set of states is given by the union of
\begin{align*}
&V_0 =\{s\} \text{ where } s \text{ is the initial state in the $\swdp$ instance,}\\
&V_k = \{x_k,\neg x_k\} \text{ for each } 1 \leq k \leq n, \text{ and }\\
&V_{n+j} =\{\ell_{j,1}, \ell_{j,2},\ell_{j,3}\} \text{ for each } 1 \leq j \leq m.
\end{align*}
We add a transition from each state in $V_i$ to each state in $V_{i+1}$ for $0
\leq i \leq n + m$, assuming $V_{n+m+1} = V_0$. The game has $2n+1$ players
$P_0,\ldots,P_{2n}$. $P_0$ owns all the states and her safety objective is given
by the unsafe set $F_0 = \emptyset$, \ie, she always wins. This implies that all
strategy profiles are NE, as the other players have no choice at any state. For
$1 \leq k \leq n$, $P_{2k-1}$ and $P_{2k}$ have safety objectives given by the
unsafe states
\begin{align*}
&F_{2k-1} = \{x_k\} \cup \{\ell_{j,p} \mid \ell_{j,p} \text{ is the literal } x_k\} \\
&F_{2k} = \{\neg x_k\} \cup \{\ell_{j,p} \mid \ell_{j,p} \text{ is the literal } \neg x_k\} 
\end{align*}
At least $n$ of these sets $F_1,\ldots,F_{2n}$ will be visited along any
infinite play and thus at least $n$ of these $2n$ players $P_1,\ldots, P_{2n}$
will always lose. We show that $\phi$ is satisfiable if, and only if, there exists
an NE for which at most (and hence exactly) $n$ of these $2n$ sets
$F_1,\ldots,F_{2n}$ are visited, \ie, at least $n+1$ players win. In other
words, $\phi$ is satisfiable if, and only if, there is an NE $\sigma$ in
$\cG_\phi$ with $\sw(\sigma) \geq n+1$.

Assume $\phi$ is satisfiable and let $\tau$ be a satisfying valuation. The
strategy $\sigma_0$ for $P_0$ simply follows $\tau$, \ie, for states in
$V_{k-1}$ for $1 \leq k \leq n$, the strategy chooses $x_k$ if $\tau(x_k) =
\true$ and it chooses $\neg x_k$ otherwise. From a state in $V_{n+k-1}$ for $1
\leq k \leq m$, it chooses one of the $\ell_{j,p}$ that evaluates to $\true$
under $\tau$, say the one with the least index. As a result, the number of sets in
$F_1, \ldots, F_n$ that are visited at least once is $n$. Since $P_0$ owns all
the states and wins using $\sigma_0$ and the strategies of all the other players
are empty, we have an NE $\sigma$ where $n+1$ players win.

Conversely, pick a play in $\cG$ resulting from a strategy $\sigma_0$ of $P_0$
such that at most (hence exactly) $n$ of the sets $F_1,\ldots,F_{2n}$ are
visited at least once. In particular, this play never visits one of $x_k$ and
$\neg x_k$ for any $1 \leq k \leq n$. Clearly, such a strategy is part of an NE
$\sigma$ (since all strategy profiles are) in which the number of winning
players is $n+1$. One can define a truth valuation $\tau$ over
$\{x_1,\ldots,x_n\}$ from the play -- simply set $\tau(x_k)$ to $\true$ if $x_k$
is visited at least once and to $\false$ otherwise. Also, any state of $V_{n+j}$
with $1 \leq j \leq m$ that is visited at least once must correspond to a
literal that is assigned the value $\true$ by $\tau$, otherwise there would be
more than $n$ states among $F_1,\ldots,F_n$ visited at least once. Hence, each
clause of $\phi$ evaluates to $\true$ under $\tau$, and therefore $\phi$ is
satisfiable.

To show that $\swdp$ for safety objectives is in $\np$, we follow the Algorithm in Section~5.2.3 in \cite{bouyer2015pure} with a small modification, just as in the reachability case in 
Section~\ref{subsec:reach}. The algorithm reduces the suspect game $\cH(\cG,L)$
to the safety game $\cJ(\cG)$ with safety objective $\Omega_L$. See \cite{bouyer2015pure} 
for the technical details.

\be
\item Given a value $v \in [0,n]$, first guess a lasso-shaped play $\rho = \alpha_1 \cdot \alpha_2^\omega$ where $|\alpha_i|^2 \leq |\state|^2$ in $\cJ(\cG)$ such that \adam\ obeys \eve\ along $\rho$, and the play $\pi = \proj_1(\rho)$ in $\cG$ satisfies the constraint that at least $v$ players are winning in it.
\item Then check that any deviation by \adam\ along $\rho$, say at position $i$ leads to a state from which \eve\ has a strategy $\sigma^i$ that ensures that any play in $\rho[0,i] \cdot \out(\sigma^i)$ is winning.
\ee

We refer to \cite{bouyer2015pure} for the details of step 2 above and the proof that this nondeterministic algorithm runs in polynomial time.

\end{proof}

\paragraph{\bf Pareto Optimal Decision Problem}

It is clear that the construction in the proof of Lemma~\ref{lem:swdp-safe}
yields an NE that is Pareto-optimal if, and only if, the formula $\phi$ is
satisfiable, just as in the case of reachability. 

\begin{lem}
\label{lem:podp-safe}
$\podp$ for safety objectives is $\np$-hard.
\end{lem}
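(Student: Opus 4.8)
The plan is to reduce from $\sat$ and reuse the game $\cG_\phi$ constructed in the proof of Lemma~\ref{lem:swdp-safe}. First I would record the structural bound already implicit there: for each $1\le k\le n$ every play visits one of $x_k,\neg x_k$, so at most one of $P_{2k-1},P_{2k}$ is ever safe, and together with the always-winning owner $P_0$ no play has more than $n+1$ winners among the original $2n+1$ players. Next I would invoke the forward analysis of Lemma~\ref{lem:swdp-safe}: a play attains this bound $n+1$ exactly when the literals chosen in the clause gadgets are consistent with the valuation fixed in the variable gadgets, i.e. exactly when these choices witness a satisfying assignment. Hence if $\phi$ is satisfiable, following a satisfying valuation $\tau$ yields a play with $n+1$ winners; since it realises the global maximum number of winners, its winner profile cannot be strictly dominated by any play and is therefore Pareto optimal. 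Being an equilibrium outcome, this gives the easy direction $\phi\text{ satisfiable}\Rightarrow\podp(\cG_\phi,s)=\text{yes}$.

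The delicate direction, and what I expect to be the main obstacle, is the converse: when $\phi$ is unsatisfiable there must be \emph{no} NE with a Pareto-optimal payoff. The subtlety is that in $\cG_\phi$ the owner $P_0$ controls every state and the other players have singleton action sets, so \emph{every} strategy profile is an NE; consequently every maximal winner profile among plays is automatically realised by some NE, and a Pareto-optimal NE would exist irrespective of satisfiability. Thus the bare reuse of the $\swdp$ construction does not by itself settle hardness, and the real work is to break this degeneracy so that sub-maximal plays cease to be equilibria precisely when $\phi$ fails.

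To break the degeneracy I would turn $\cG_\phi$ into a concurrent game $\cG'$ by giving every safety player, at each state of her unsafe set, one additional action that diverts the play into a single shared two-action matching-pennies gadget -- a subgame possessing no pure Nash equilibrium, since each of its two players can always profitably switch. Crucially this is a \emph{uniform} modification, so the graph stays polynomial, and which of the resulting deviations are profitable is decided globally through the winning region of the suspect game $\cH(\cG',\mathrm{Los}(\pi))$ rather than by any local test (thereby sidestepping the fact that ``forced versus avoidable loss'' cannot be detected locally without remembering the whole valuation). The equivalence to establish is that, for a maximal play $\pi$, the state $(s,\agt)$ lies in \eve's winning region iff $\pi$ witnesses a satisfying assignment: along a satisfying play every loss is \emph{forced}, hence punishable, the diversion buys the deviator nothing, \eve\ keeps $(s,\agt)$ in her winning region, and the profile remains an NE; whereas under an unsatisfiable $\phi$ every maximal play carries an inconsistent or clause-falsifying, and hence \emph{unpunishable}, loss whose owner can divert into the no-equilibrium gadget and strictly gain, so no maximal profile is sustained by an NE. Proving exactly this equivalence through $\cH$ -- showing that the matching-pennies diversion punishes the forced losses but not the clause-falsifying ones, so that maximal-profile equilibria in $\cG'$ exist iff $\phi$ is satisfiable -- is the step I expect to be the crux of the argument.
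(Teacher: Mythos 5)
Your forward direction is exactly what the paper does, and indeed the paper's entire proof of this lemma is the ``bare reuse'' you reject: it simply asserts that the game $\cG_\phi$ of Lemma~\ref{lem:swdp-safe} admits a Pareto-optimal NE if, and only if, $\phi$ is satisfiable. Your objection to the converse direction is substantive and, under the paper's own definitions, appears to be correct: in $\cG_\phi$ player $P_0$ owns every state and always wins, so (as the paper itself notes) \emph{every} strategy profile is an NE, and every play from $s$ is the outcome of some NE. The set of winner profiles of plays is finite and nonempty, hence has a maximal --- i.e.\ Pareto-optimal --- element, and that element is realised by some play and therefore by some NE. So the constructed instance is a yes-instance of $\podp$ whether or not $\phi$ is satisfiable, and the reduction as stated does not establish $\np$-hardness. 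You have correctly identified a genuine gap in the paper's argument (one that equally affects its $\podp$ hardness claims for reachability, coB\"uchi and Muller objectives); what the construction actually shows is only that deciding whether there is an NE whose payoff dominates a \emph{given} profile, or attains the maximum social welfare $n+1$, is $\np$-hard.

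That said, your proposed repair is not a proof, and I do not believe it works as described. First, granting a safety player an extra action \emph{at a state of her own unsafe set} can never create a profitable deviation: once the play is at such a state her payoff is already $0$ and no continuation restores it, so she is indifferent and the original profiles remain equilibria. Second, the fact that an off-path gadget has no pure NE is irrelevant to whether a profile of the ambient game is an NE; what matters is whether some individual player can strictly improve her own $0/1$ payoff by unilaterally deviating, which is a property of the deviator's objective along the deviated play, not of the gadget's internal equilibrium structure. Third, any such modification must be re-checked against the forward direction: the added actions could equally destroy the equilibrium witnessing a satisfying assignment. You explicitly defer the ``crux equivalence'' through the suspect game, so the converse direction of the hardness reduction remains unestablished in your write-up. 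In short: your critique of the paper is well founded, but your proposal does not yet constitute a proof of the lemma.
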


For the upper bound, we show that $\podp$ for safety objectives is in
the class $\p^\np = \mathsf{\Delta}^p_2$ in the polynomial hierarchy using the same procedure as in the reachability case, except now we use the $\np$ oracle for deciding $\swdp$ for safety objectives. As in the case of reachability, we leave the question of whether $\podp$ for safety objectives is $\p^\np$-hard open.

\subsection{B{\"u}chi Games}
\label{subsec:buchi}

\paragraph{\bf Social Welfare Problem}
We show that the social welfare problem for B{\"u}chi objectives can be solved in polynomial time by giving a polynomial time algorithm that invokes the procedure for Constrained NE Existence Problem from \cite{bouyer2015pure}. A polynomial-time algorithm for the latter was presented in \cite{bouyer2015pure}.

\begin{lem}
\label{lem:swdp-buchi}
$\swdp$ for B{\"u}chi objectives is in $\p$.
\end{lem}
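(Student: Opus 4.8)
The plan is to reduce $\swdp$ for Büchi objectives to a polynomial number of calls to the Constrained NE Existence Problem from Bouyer \etal~\cite{bouyer2015pure}, which is known to be in $\p$ for Büchi objectives. The key observation is that the social welfare threshold $v$ asks for an NE in which \emph{at least $v$ agents win}, but it does not specify \emph{which} $v$ agents win. The Constrained NE Existence Problem, on the other hand, fixes a lower and upper payoff threshold $\mathbf{v}, \mathbf{u} \in \B^n$ componentwise, \ie, it pins down a specific target set of winners. So the natural approach is to enumerate over candidate winning coalitions.

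**First I would** reformulate $\sw(s,\bar\sigma) \geq v$ as a disjunction over coalitions: there is an NE with social welfare at least $v$ if, and only if, there is some set $W \subseteq \agt$ with $|W| \geq v$ and an NE $\bar\sigma$ such that every agent in $W$ wins, \ie, $\pay_i(s,\bar\sigma) = 1$ for all $i \in W$. Each such query is an instance of the Constrained NE Existence Problem with the lower threshold $\mathbf{v}$ set to the indicator vector of $W$ (forcing those agents to win) and the upper threshold $\mathbf{u} = \mathbf{1}$ (imposing no upper constraint). The naive enumeration over all $W$ with $|W| \geq v$ is exponential, so the crux is to avoid it. **The hard part will be** keeping the number of oracle calls polynomial, since there are exponentially many coalitions of size at least $v$.

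**The key step** to overcome this obstacle is to exploit a monotonicity/downward-closure property special to Büchi objectives. I would argue that for Büchi objectives it suffices to check, for each agent $i$ separately or each \emph{singleton-augmented} coalition, rather than every subset. More precisely, the plan is to show that one can reduce to checking whether there is an NE in which a \emph{specific} set of exactly the right cardinality wins, and that the relevant sets can be generated greedily: one starts from the winning region computation of the suspect game $\cH(\cG, L)$ for the appropriate loser set $L$, and observes that the set of achievable winner-profiles (over NE outcomes, which by Proposition~\ref{prop:lasso} are lasso-shaped) has enough structure — being determined by which SCCs of $\cJ(\cG)$ the Büchi target sets intersect within the winning region — that the maximum achievable social welfare can be read off directly in polynomial time without enumerating coalitions. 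Concretely, I would compute the winning region, restrict attention to reachable SCCs, and for each SCC count how many Büchi sets $F_i$ it hits while remaining consistent with \eve\ winning the suspect game; the maximum such count over reachable SCCs gives the optimal social welfare, which is then compared against $v$.

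**I expect the main obstacle** to be establishing precisely that maximising social welfare over Büchi-NE outcomes decomposes cleanly over SCCs of the suspect game's underlying graph and does not require joint reasoning over coalitions — that is, that the winner set realisable along an NE outcome is exactly $\{i \mid F_i \cap C \neq \emptyset\}$ for some reachable SCC $C$ lying in the appropriate winning region, so that no genuinely exponential search is hidden. Once this SCC-characterisation is in place, the polynomial-time bound follows immediately by combining Tarjan's algorithm~\cite{tarjan1972depth} for SCC decomposition with the polynomial-time solvability of the Büchi suspect game and Constrained NE Existence Problem from \cite{bouyer2015pure}, giving membership in $\p$.
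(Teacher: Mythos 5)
Your proposal is correct and follows essentially the same route as the paper: decompose the game graph into reachable SCCs via Tarjan's algorithm, count for each SCC how many B\"uchi target sets $F_i$ it meets, and verify candidate winner profiles using the polynomial-time Constrained NE Existence procedure (equivalently, the suspect-game winning region) of Bouyer \etal, with correctness resting on the lasso characterisation of NE outcomes (Proposition~\ref{prop:lasso}). The obstacle you flag --- that the winner set of an NE outcome is realised by an SCC-level profile rather than requiring enumeration of coalitions --- is exactly the point the paper also relies on (and likewise does not belabour), so no genuinely different idea separates the two arguments.
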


\begin{proof}
The following is a polynomial-time algorithm for deciding whether there is a Nash equilibrium with $v$
or more winners in a given concurrent game with B{\"u}chi objectives, starting from a state $s$.
\be
\item Find all the reachable SCCs in the underlying digraph of $\cG$ using Tarjan's algorithm~\cite{tarjan1972depth}. Call the number of agents that meet their B{\"u}chi objectives in the SCC $C$ the \emph{rank of} $C$. Agent~$i$ meets her objective in the SCC $C$ if $F_i \cap C \neq \emptyset$. 
\item Sort the SCCs in non-increasing order according to their rank.
\item For each rank $r$ starting from the highest down to $v$, check whether there is an NE with $r$ winners using the
algorithm for constrained NE Existence from \cite{bouyer2015pure} by setting both the lower and upper thresholds $\mathbf{v}$ and $\mathbf{u}$ to the winner profile of each SCC $C$ of rank $r$ one by one. If such an NE exists then return `yes', else return `no'.
\ee
\end{proof}

The correctness of the algorithm depends on Proposition~\ref{prop:lasso} on the lasso characterization of
NEs in concurrent games.

\paragraph{\bf Pareto Optimal Decision Problem}

\begin{lem}
\label{lem:podp-buchi}
$\podp$ for B{\"u}chi objectives is in $\p$.
\end{lem}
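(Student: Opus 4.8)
The plan is to reduce $\podp$ for B\"uchi objectives to polynomially many calls of the constrained NE existence algorithm of Bouyer \etal~\cite{bouyer2015pure}, which runs in polynomial time for B\"uchi objectives. The key structural observation is that, for B\"uchi objectives, the winner profile of a play $\rho$ is completely determined by $\infi(\rho)$: agent~$i$ wins iff $F_i \cap \infi(\rho) \neq \emptyset$. Since in a finite graph $\infi(\rho)$ is a strongly connected set of states, it is contained in a single strongly connected component (SCC) $C$ of the reachable subgraph, and consequently $\win(\rho) \subseteq W(C) := \{i \in \agt \mid F_i \cap C \neq \emptyset\}$. Conversely, for every reachable SCC $C$ there is a play that reaches $C$ and then cycles through all of its states, realising exactly the winner profile $W(C)$. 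Hence every element of the set $P$ of all winner profiles is dominated, under the componentwise ordering, by some member of the finite family $\{W(C) \mid C \text{ a reachable SCC}\}$, and the Pareto-optimal elements of $P$ are precisely the $\subseteq$-maximal members of this family.

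First I would run Tarjan's algorithm~\cite{tarjan1972depth} to enumerate the reachable SCCs and compute $W(C)$ for each; there are at most $|\state|$ of them. Then I would extract the $\subseteq$-maximal winner sets among the $W(C)$, giving a polynomial-size collection $\mathcal{M}$ of candidate Pareto-optimal profiles. By the observation above, a strategy profile has a Pareto-optimal payoff iff the winner profile of its outcome lies in $\mathcal{M}$.

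Next, for each $p \in \mathcal{M}$ I would test whether the game admits an NE whose payoff equals the indicator vector of $p$, by invoking the constrained NE existence procedure of \cite{bouyer2015pure} with both the lower and the upper threshold set to $p$. An equilibrium's winner profile is some element of $P$, and it is Pareto optimal exactly when it belongs to $\mathcal{M}$ (if $\win(\pi) \subseteq W(C)$ with $\win(\pi)$ maximal in $P$, then $\win(\pi) = W(C)$); hence there is a Pareto-optimal NE iff at least one of these $|\mathcal{M}| \le |\state|$ exact-payoff queries answers yes. Each query runs in polynomial time, so the whole procedure is in $\p$.

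The step I expect to require the most care is the justification that the Pareto frontier of the full play set $P$ -- which by definition ranges over \emph{all} strategy profiles, equilibria or not -- coincides with the $\subseteq$-maximal $W(C)$'s. This needs both directions above: that every $\infi(\rho)$ sits inside one SCC so that $\win(\rho)$ is dominated by $W(C)$, and that every reachable SCC is fully realised by a reachable cycle so that $W(C) \in P$. This is what lets me replace the global Pareto condition by a polynomial list of exact-payoff constraints amenable to the constrained NE existence oracle, and it is the only place where the B\"uchi-specific structure is used.
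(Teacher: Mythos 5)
Your proposal is correct and follows the same basic architecture as the paper's proof: enumerate the reachable SCCs with Tarjan's algorithm, exploit the fact that for B\"uchi objectives the winner profile of any play is dominated by some $W(C)$ (since $\infi(\rho)$ lies inside a single SCC) while each $W(C)$ is itself realised by a play cycling through all of $C$, and then discharge each surviving candidate profile with an exact-payoff call (lower threshold equal to upper threshold) to the polynomial-time constrained NE existence procedure from \cite{bouyer2015pure}. The one genuine difference is how the Pareto frontier is extracted from the SCC data. The paper sorts the SCCs by \emph{rank}, \ie, by the cardinality $|W(C)|$, and starting from the highest rank declares ``yes'' if some SCC of that rank yields an NE and ``no'' if all of them fail; you instead compute the $\subseteq$-maximal sets among the $W(C)$ and query every one of them. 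Since Pareto optimality is defined via the componentwise (subset) order rather than by cardinality, your formulation is the faithful one: a lower-rank SCC whose winner set is incomparable to every higher-rank one is still Pareto optimal, and an NE realising it would be missed by a procedure that stops once the top rank is exhausted. Your explicit justification that the maximal elements of $P$ coincide with the $\subseteq$-maximal $W(C)$'s is exactly the lemma the paper leaves implicit, and it is what licenses replacing the global quantification over all plays by polynomially many exact-payoff oracle calls. In short: same approach and same complexity bound, but your handling of the partial order is more careful and arguably repairs an imprecision in the paper's terser argument.
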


\begin{proof}
A small modification in the algorithm in the proof of Lemma~\ref{lem:swdp-buchi} gives a polynomial-time algorithm for deciding whether a Pareto-optimal NE exists. After sorting the SCCs in non-increasing order according to their rank, starting from the highest rank we check if any of the SCCs with the given rank is an NE. If a rank r is found for which all SCCs are non-NEs, then return `no'. Otherwise, if an SCC corresponding to an NE is found, then return `yes'.
\end{proof}

\subsection{CoB{\"u}chi Games}
\label{subsec:cobuchi}

\paragraph{\bf Social Welfare Problem}

We show that $\swdp$ is $\np$-complete for coB{\"u}chi objectives by reduction
from $\sat$. The reduction is the same as in Section~\ref{subsec:safe} for
safety objectives, with the unsafe states now playing the role of coB{\"u}chi objectives.

\begin{lem}
\label{lem:swdp-cobuchi}
$\swdp$ for coB{\"u}chi objectives is $\np$-complete.
\end{lem}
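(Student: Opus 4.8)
The plan is to establish $\np$-completeness in the usual two parts: $\np$-hardness by a reduction from $\sat$, and membership in $\np$ by the guess-and-verify algorithm built on the suspect game. For hardness I would reuse verbatim the $(2n+1)$-player game $\cG_\phi$ constructed in the proof of Lemma~\ref{lem:swdp-safe}, but now read each set $F_i$ as the reject set of a coB{\"u}chi objective for player $P_i$ rather than as a safety unsafe set. The only conceptual shift is from occurrence to infinite occurrence: a player now loses precisely when some state of her set $F_i$ lies in $\infi(\rho)$, not merely in $\occ(\rho)$. The structural fact that makes the two readings coincide is that the game graph is a single layered cycle (the transitions close up via $V_{n+m+1}=V_0$), so every infinite play traverses each layer $V_k$ infinitely often; consequently, for each variable $k$ at least one of $x_k,\neg x_k$ belongs to $\infi(\rho)$, and hence at least $n$ of the $2n$ players $P_1,\ldots,P_{2n}$ lose along any play.

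First I would argue the ``satisfiable $\Rightarrow$ NE with $\geq n+1$ winners'' direction. Given a satisfying valuation $\tau$, let $P_0$ play the purely periodic strategy that selects $x_k$ or $\neg x_k$ according to $\tau$ in each variable layer and a true literal in each clause layer. Then $\infi(\rho)$ consists exactly of the chosen states, so for each $k$ exactly one of $P_{2k-1},P_{2k}$ meets her coB{\"u}chi objective while $P_0$ wins trivially (empty reject set); this yields $n+1$ winners. Since $P_0$ owns every state and the remaining players have singleton action sets, no unilateral deviation is possible, so every strategy profile—in particular this one—is an NE. For the converse, suppose an NE has at least $n+1$ winners. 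Because at least $n$ of $P_1,\ldots,P_{2n}$ must lose, exactly $n$ lose, forcing for each $k$ exactly one of $x_k,\neg x_k$ to lie in $\infi(\rho)$; I set $\tau(x_k)$ true iff $x_k\in\infi(\rho)$. Each clause layer $V_{n+j}$ is traversed infinitely often, so some literal of $C_j$ lies in $\infi(\rho)$, and that literal must be consistent with $\tau$—otherwise it would drop an additional player into a losing set, contradicting the count—so $C_j$ is satisfied and $\phi$ holds.

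For the upper bound I would mirror the guess-and-verify procedure used for reachability and safety, substituting the coB{\"u}chi instance of the constrained NE existence algorithm from \cite{bouyer2015pure}. Concretely: guess a lasso-shaped play $\rho=\alpha_1\cdot\alpha_2^\omega$ in the suspect CGS $\cJ(\cG)$ with polynomially bounded $\alpha_1,\alpha_2$ (justified by Proposition~\ref{prop:lasso}) along which \adam\ obeys \eve; check that the induced play $\pi=\proj_1(\rho)$ has at least $v$ winners; and verify in polynomial time that $\rho$ witnesses a winning strategy for \eve\ in the coB{\"u}chi suspect game $\cH(\cG,\mathrm{Los}(\pi))$. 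The social-welfare count on $\pi$ is the only addition to the \cite{bouyer2015pure} algorithm and is clearly polynomial, so the whole nondeterministic procedure runs in polynomial time.

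I expect the main obstacle to be the hardness direction's passage from occurrence to infinite occurrence. I must ensure that transient prefix visits never interfere: in the forward direction this is handled by using a play that loops consistently from the start (so $\infi(\rho)$ equals the loop), and in the converse it is handled by noting that the ``$\geq n+1$ winners'' hypothesis already forbids any variable or clause state outside the intended consistent valuation from being visited infinitely often. Verifying that these arguments leave no gap—especially that selecting an inconsistent literal in a clause gadget genuinely forces an extra infinitely-visited reject state, and hence an extra loser—is the one place where the coB{\"u}chi semantics needs care.
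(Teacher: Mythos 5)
Your proposal is correct and follows essentially the same route as the paper: the same reuse of the $(2n+1)$-player layered-cycle game from Lemma~\ref{lem:swdp-safe} with the sets $F_i$ reinterpreted as coB{\"u}chi reject sets, the same counting argument that exactly $n$ of $P_1,\ldots,P_{2n}$ can win so that satisfiability corresponds to an NE with $n+1$ winners, and the same suspect-game guess-and-verify procedure from \cite{bouyer2015pure} for membership in $\np$. Your explicit attention to the occurrence-versus-infinite-occurrence shift (and why an inconsistent clause literal forces an extra infinitely-visited reject set) is a slightly more careful write-up of exactly the argument the paper gives.
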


\begin{proof}
We reduce $\sat$ to $\swdp$ with coB{\"u}chi objectives for a turn-based game
using the same construction as in the proof of Lemma~\ref{lem:swdp-safe}, with the unsafe states $F_1,\ldots,F_n$ being now designated as the coB{\"u}chi objectives.

At least $n$ of these sets $F_1,\ldots,F_{2n}$ will be visited infinitely often along any infinite play and thus at least $n$ of these $2n$ players $P_1,\ldots, P_{2n}$ will always lose. We show that $\phi$ is satisfiable if, and only if,
there exists an NE for which at most (and hence exactly) $n$ of these $2n$ sets
$F_1,\ldots,F_{2n}$ are visited infinitely often, \ie, at least $n+1$ players
win. In other words, $\phi$ is satisfiable if, and only if, there is an NE
$\sigma$ in $\cG_\phi$ with $\sw(\sigma) \geq n+1$.

Assume $\phi$ is satisfiable and let $\tau$ be a satisfying valuation. The
strategy $\sigma_0$ for $P_0$ simply follows $\tau$, \ie, for states in
$V_{k-1}$ for $1 \leq k \leq n$, the strategy chooses $x_k$ if $\tau(x_k) =
\true$ and it chooses $\neg x_k$ otherwise. From a state in $V_{n+k-1}$ for $1
\leq k \leq m$, it chooses one of the $\ell_{j,p}$ that evaluates to $\true$
under $\tau$, say the one with the least index. This way the number of sets in
$F_1, \ldots, F_n$ that are visited infinitely often is $n$ and the other sets
are not visited at all. Since $P_0$ owns all the states and wins using
$\sigma_0$ and the strategies of all the other players are empty, we have an NE
$\sigma$ where $n+1$ players win.

Conversely, pick a play in $\cG$ resulting from a strategy $\sigma_0$ of $P_0$
such that at most (hence exactly) $n$ of the sets $F_1,\ldots,F_{2n}$ are
visited infinitely often. In particular, this play always visits one of $x_k$ and
$\neg x_k$ finitely often for any $1 \leq k \leq n$. Clearly, such a strategy is
part of an NE $\sigma$ (since all strategy profiles are) in which the number of
winning players is $n+1$. One can define a truth valuation $\tau$ over
$\{x_1,\ldots,x_n\}$ from the play -- simply set $\tau(x_k)$ to $\true$ if $x_k$
is visited infinitely often and to $\false$ otherwise. Also, any state of
$V_{n+j}$ with $1 \leq j \leq m$ that is visited infinitely often must
correspond to a literal that is assigned the value $\true$ by $\tau$, otherwise
there would be more than $n$ states among $F_1,\ldots,F_n$ visited infinitely
often. Hence each clause of $\phi$ evaluates to $\true$ under $\tau$, and hence
$\phi$ is satisfiable.

To show that $\swdp$ for coB{\"u}chi objectives is in
$\np$, we follow the Algorithm in Section~5.4.3 in \cite{bouyer2015pure}
with a small modification as in the reachability case above. The algorithm also uses the suspect game $\cH(\cG,L)$ and its reduction to the safety game $\cJ(\cG)$ with safety objective $\Omega_L$. See \cite{bouyer2015pure} for the technical details.

\be
\item Given a value $v \in [0,n]$, first guess a lasso-shaped play $\rho = \alpha_1 \cdot \alpha_2^\omega$ where $|\alpha_i|^2 \leq |\state|^2$ in $\cJ(\cG)$ such that \adam\ obeys \eve\ along $\rho$, and the play $\pi = \proj_1(\rho)$ in $\cG$ satisfies the constraint that at least $v$ players are winning in it.
\item Then check that any deviation by \adam\ along $\rho$, say at position $i$ leads to a state from which \eve\ has a strategy $\sigma^i$ that ensures that any play in $\rho[0,i] \cdot \out(\sigma^i)$ is winning.
\ee

We refer to \cite{bouyer2015pure} for the details of step 2 above and the proof that this nondeterministic algorithm runs in polynomial time.

\end{proof}

\paragraph{\bf Pareto Optimal Decision Problem}

As in the case of the other objectives considered above, the construction in the
proof of Lemma~\ref{lem:swdp-cobuchi} yields an NE which is Pareto-optimal if and
only if the formula $\phi$ is satisfiable.

\begin{lem}
\label{lem:podp-cobuchi}
$\podp$ for coB{\"u}chi objectives is $\np$-hard.
\end{lem}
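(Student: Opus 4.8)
The plan is to reuse verbatim the polynomial reduction from $\sat$ built in the proof of Lemma~\ref{lem:swdp-cobuchi} and to rely on the observation recorded just before this lemma: the game $\cG_\phi$ produced by that reduction admits a Pareto-optimal NE if, and only if, $\phi$ is satisfiable. Since the reduction is polynomial-time computable and $\sat$ is $\np$-hard, this biconditional immediately gives $\np$-hardness of $\podp$ for coB\"uchi objectives. The substance therefore lies in justifying the biconditional, which I would split into the two directions below.

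For the forward direction I would argue as in the social-welfare proof. Along any infinite play of $\cG_\phi$, for each $k$ at least one of $x_k,\neg x_k$ is visited infinitely often, so at least one of $P_{2k-1},P_{2k}$ loses; together with $P_0$ always winning, this caps the social welfare of every play at $n+1$. If $\phi$ is satisfiable, letting $\sigma_0$ follow a satisfying valuation $\tau$ (as constructed in Lemma~\ref{lem:swdp-cobuchi}) yields a play with exactly $n+1$ winners. As $n+1$ is the maximum number of winners over all plays, its winner profile has maximum cardinality, hence is $\leq_P$-maximal, i.e. Pareto optimal; and since $P_0$ owns every state and wins while no other player has an alternative move, this profile is the outcome of an NE. Thus a Pareto-optimal NE exists.

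For the converse I would start from a Pareto-optimal NE and, using Proposition~\ref{prop:lasso}, take its outcome to be ultimately periodic. If this outcome has $n+1$ winners, I read off $\tau(x_k)=\true$ exactly when $x_k$ is visited infinitely often; the fact that only $n$ of the sets $F_1,\dots,F_{2n}$ are hit infinitely often forces every clause to contain a literal made true by $\tau$, exactly as in Lemma~\ref{lem:swdp-cobuchi}, so $\phi$ is satisfiable.

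The step I expect to be the main obstacle is closing this converse in the case where the Pareto-optimal NE has fewer than $n+1$ winners. Because $P_0$ wins under every strategy and the remaining players have no choice, every strategy profile of $\cG_\phi$ is an NE, and the finite set $P$ of winner profiles always has $\leq_P$-maximal elements; one must therefore argue that no such maximal profile of social welfare below $n+1$ is attainable when $\phi$ is unsatisfiable. Concretely, the goal is to show that the unique Pareto-optimal winner profile is the maximum-welfare one, so that the existence of a Pareto-optimal NE is equivalent to its realisability, i.e. to satisfiability of $\phi$. Should this uniqueness fail for the bare gadget, I would augment it so that every play corresponding to an unsatisfied assignment is strictly Pareto-dominated by another play, thereby eliminating spurious Pareto-optimal NEs and restoring the equivalence.
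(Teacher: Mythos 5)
Your plan is the same one the paper follows for this lemma (reuse the $\swdp$ gadget of Lemma~\ref{lem:swdp-cobuchi} and claim that $\cG_\phi$ admits a Pareto-optimal NE if, and only if, $\phi$ is satisfiable), and your forward direction is fine. But the obstacle you flag in the converse is not a loose end you merely failed to tidy up --- it is a genuine gap, and the paper's own one-sentence proof does not address it either. In $\cG_\phi$, $P_0$ owns every state and always wins, and no other player ever has more than one available action, so \emph{every} play from $s$ is the outcome of an NE. The set of NE payoffs therefore coincides with the entire set $P$ of winner profiles; since $P$ is finite and nonempty it always contains $\leq_P$-maximal elements, each of which is by definition Pareto optimal and is realised by some NE. Hence $\cG_\phi$ is a yes-instance of $\podp$ \emph{regardless} of whether $\phi$ is satisfiable, and the map $\phi \mapsto \cG_\phi$ is not a reduction from $\sat$ to $\podp$. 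What the gadget does establish is that an NE with the \emph{particular} payoff ``$P_0$ together with $n$ of $P_1,\dots,P_{2n}$ win'' exists iff $\phi$ is satisfiable --- a constrained-existence statement, not a Pareto-optimality one.

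Your fallback repair --- augment the gadget so that every play encoding a non-satisfying assignment is strictly $\leq_P$-dominated by another play --- cannot work as stated: when $\phi$ is unsatisfiable \emph{every} play encodes a non-satisfying assignment, yet a finite nonempty poset always has maximal elements, so some such play would remain undominated and would still be an NE outcome. Any genuine fix must destroy the property that all plays are equilibrium outcomes, for instance by giving a player with a nontrivial objective a real choice, so that the $\leq_P$-maximal profiles are witnessed only by plays that cannot be sustained as NEs unless $\phi$ is satisfiable. In short: you took the same route as the paper, your forward direction is correct, and the converse remains open in your write-up at exactly the point where it is left unjustified in the paper's.
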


For the upper bound, we show that $\podp$ for coB{\"u}chi objectives is in
the class $\p^\np = \mathsf{\Delta}^p_2$ in the polynomial hierarchy using the same procedure as in the safety case, except now we use the $\np$ oracle for deciding $\swdp$ for coB{\"u}chi objectives. As in the case of safety, we leave the question whether $\podp$ for coB{\"u}chi objectives is $\p^\np$-hard open.

\subsection{Parity Games}
\label{subsec:parity}

\paragraph{\bf Social Welfare Problem}

Bouyer \etal~\cite{bouyer2015pure} showed that the Constrained NE Existence Problem is 
$\p^{\np}_\parallel$-complete for parity objectives by reduction from $\osat$.
Intuitively, $\p^{\np}_\parallel$ is the class of all
languages accepted by some deterministic polynomial time Turing machine $M$
using an oracle for solving $\np$ problems, such that on any input the machine
$M$ builds a set of queries to the oracle before making the queries just once. For
a formal definition see \cite{wagner1990bounded}. In the $\osat$ problem, given
a finite set of instances of $\sat$, the goal is to decide whether the number of
satisfiable instances is even. The problem is known to be
$\p^{\np}_\parallel$-complete~\cite{gottlob1995np}.

We show the same upper bound for $\swdp$ for parity objectives, \ie, it is in $\p^{\np}_\parallel$. The proof is essentially the same algorithm as in \cite{bouyer2015pure} -- see Section~5.6.2 in the paper for Rabin objectives.
We first translate the parity objectives to corresponding Rabin ones with half as many pairs as the number of priorities and then apply the algorithm in \cite{bouyer2015pure}.
The only modification in the algorithm is identical to the ones for the reachability, safety and coB{\"u}chi objectives, namely step 1, where we check that the play $\pi = \proj_1(\rho)$ in $\cG$ satisfies the constraint that at least $v$ players are winning in it, where $v$ is the threshold input to $\swdp$.

\begin{lem}
\label{lem:swdp-parity}
$\swdp$ for parity objectives is in $\p^{\np}_\parallel$.
\end{lem}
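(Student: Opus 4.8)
The plan is to reduce the problem to the Rabin case already solved by Bouyer \etal~\cite{bouyer2015pure} and to reuse their $\p^{\np}_\parallel$ procedure (Section~5.6.2) essentially unchanged, altering only the acceptance predicate checked during the nondeterministic guessing phase. First I would record that a parity condition is a special case of a Rabin condition: a priority function $p:\state\ra\N$ is captured by a Rabin acceptance condition with about half as many pairs as there are priorities. Applying this translation to every objective $\obj_i$ turns the given parity game into a Rabin game whose size is polynomial in that of the original and whose set of Nash equilibria, together with all payoff profiles, is unchanged, since on any play $\rho$ the value $\pay_i(\rho)$ is determined by $\infi(\rho)$ identically under the parity reading and the translated Rabin reading. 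Hence any algorithm answering a payoff-profile question for Rabin objectives can be invoked on the translated instance to settle the corresponding question for the parity instance.

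Next I would recall the structure of the Bouyer \etal\ procedure. It is a deterministic polynomial-time computation that issues a single, non-adaptive batch of $\np$ queries to a $\sat$ oracle. The nondeterministic power inside each query guesses a lasso $\rho = \alpha_1\cdot\alpha_2^\omega$ in the suspect-game CGS $\cJ(\cG)$ of bounded length (Proposition~\ref{prop:lasso}), checks that \adam\ obeys \eve\ along $\rho$, and records $\pi = \proj_1(\rho)$ together with its losing set $\mathrm{Los}(\pi)$, while the oracle supplies the power needed to solve the (Rabin-type) suspect game $\cH(\cG,\mathrm{Los}(\pi))$ and to certify that $\rho$ never leaves \eve's winning region $W(\cG,\mathrm{Los}(\pi))$. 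By the two suspect-game theorems recalled above, a lasso surviving all these tests corresponds exactly to a genuine NE of $\cG$ with outcome $\pi$. The single modification, exactly as in the reachability, safety and coB\"uchi cases, is to replace the constrained-threshold test on $\pi$ in step~1 by the social-welfare test $|\win(\pi)| \geq v$. Since $\win(\pi)$ is read off from $\infi(\pi)$ in polynomial time, this is a legal change to the guessing predicate and leaves the oracle-query structure untouched. Completeness of the search over bounded lassos is guaranteed by Proposition~\ref{prop:lasso}: if some NE attains at least $v$ winners, then one whose outcome is a lasso of the prescribed length does too, with the same payoff and hence the same winner count.

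The step I expect to be the main obstacle is verifying that this substitution genuinely preserves membership in $\p^{\np}_\parallel$ rather than only in the larger $\p^{\np}$. In the constrained problem the thresholds $\mathbf{v},\mathbf{u}$ pin down a single target payoff profile (or a narrow range), whereas the social-welfare condition admits \emph{every} profile with at least $v$ winners, a disjunction over potentially exponentially many winner sets. The key point to check is that this disjunction is resolved \emph{inside} the nondeterministic guess---the guessed $\rho$ already selects one concrete $\pi$, and hence one concrete $\mathrm{Los}(\pi)$---so that the batch of oracle queries remains the same profile-independent, single-round batch used by Bouyer \etal, with the threshold entering only the polynomial-time predicate on the guess and the final post-processing. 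Establishing that the queries stay non-adaptive under this relaxation, and in particular that no sequential probing of distinct losing sets is forced, is precisely what confines the bound to $\p^{\np}_\parallel$.
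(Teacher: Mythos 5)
Your proposal is correct and follows essentially the same route as the paper: translate the parity objectives into Rabin objectives with half as many pairs, reuse the $\p^{\np}_\parallel$ algorithm of Bouyer \etal\ (Section~5.6.2), and modify only the step that tests the guessed play $\pi$ so that it checks $|\win(\pi)| \geq v$ instead of a fixed payoff threshold. Your closing paragraph, which verifies that the social-welfare relaxation is absorbed into the nondeterministic guess and therefore does not force adaptive oracle queries, is a point the paper leaves implicit, but it does not change the argument.
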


However, for the lower bound we leave the question whether $\swdp$ for parity objectives is $\p^{\np}_\parallel$-hard as open. However, we can show the weaker result that the problem is $\np$-hard just by coding the coB{\"u}chi condition by a parity condition with two colours.

\paragraph{\bf Pareto Optimal Decision Problem}
We show that $\podp$ for parity objectives is in
the class $\p^{\p^{\np}} = \p^{\np}$ in the polynomial hierarchy using the same procedure as in the coB{\"u}chi case, except now we use the $\p^{\np}_\parallel$ oracle for deciding $\swdp$ for parity objectives. Again, we leave the question of whether $\podp$ for parity objectives is $\p^\np$-hard open.

\subsection{Muller Games}
\label{subsec:cmuller}

\paragraph{\bf Social Welfare Problem}

We show that $\swdp$ is $\pspace$-complete for Muller objectives by a reduction from
$\tqbf$. 

\begin{lem}
\label{lem:swdp-muller}
$\swdp$ for Muller objectives is $\pspace$-complete.
\end{lem}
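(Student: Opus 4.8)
The plan is to prove both directions: membership in $\pspace$ and $\pspace$-hardness by reduction from $\tqbf$. For the upper bound I would reuse the template of the earlier cases. Since the Constrained NE Existence Problem for Muller objectives is $\pspace$-complete~\cite{bouyer2015pure} --- solved by reducing the suspect game $\cH(\cG,L)$ to a two-player zero-sum Muller game on $\cJ(\cG)$, whose winner is computable in polynomial space --- it suffices to observe that an NE with $\sw(s,\bar{\sigma}) \geq v$ exists iff there is a winner-set $W \subseteq \agt$ with $|W| \geq v$ admitting an NE whose payoff profile equals the indicator $\mathbf{1}_W$. I would iterate over all such $W$ (at most $2^n$ of them) and run the Constrained NE algorithm with $\mathbf{v} = \mathbf{u} = \mathbf{1}_W$; reusing the same polynomial workspace across iterations keeps the whole procedure in $\pspace$. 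Equivalently, mirroring the earlier objectives, one guesses a lasso play $\rho = \alpha_1 \cdot \alpha_2^\omega$ (legitimate by Proposition~\ref{prop:lasso}) with at least $v$ winners along $\pi = \proj_1(\rho)$ and verifies that $\rho$ stays in $\eve$'s winning region of $\cH(\cG,\mathrm{Los}(\pi))$, which is again polynomial space.

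For the lower bound I would reduce from $\tqbf$. Given $\Psi = Q_1 x_1 \cdots Q_n x_n\, \phi$, I build a turn-based Muller game whose arena cycles through $n$ gadgets, one per variable, where gadget $i$ offers a choice between a ``true'' state and a ``false'' state for $x_i$. The gadgets of existentially quantified variables are owned by a player $\star$, and --- crucially --- all universally quantified gadgets are owned by a \emph{single} player $U$. The colouring is arranged so that the set $\infi(\rho)$ of infinitely often visited states encodes the chosen assignment; $\star$'s objective is the family of consistent assignments satisfying $\phi$, while $U$'s objective is the complement. I add one dummy player $\star'$ with no genuine choices whose objective coincides with $\star$'s, and set the threshold to $v = 2$.

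The correctness argument has two linked equivalences. First, an NE in which $\star$ wins exists iff $\star$ has a winning strategy in the induced zero-sum game where $\star$ chooses the existential variables in quantifier order to satisfy $\phi$ and $U$ chooses the universal ones to falsify it; by the standard QBF-as-game correspondence this holds iff $\Psi$ is true. If $\Psi$ is true, $\star$'s winning strategy paired with any $U$-strategy yields a play satisfying $\phi$, so $\star$ and $\star'$ both win (social welfare $\geq 2$) while $U$, unable to force $\phi$ false against a winning strategy, has no profitable deviation --- an NE. If $\Psi$ is false, then in any profile whose outcome satisfies $\phi$ (so $\star,\star'$ win), player $U$ can deviate; because $U$ owns \emph{every} universal gadget, one unilateral deviation realises a whole falsifying universal play against $\star$'s fixed strategy, making $U$ a winner, so it is not an NE. Hence every NE has $\phi$ false, giving social welfare at most $1$, never reaching $v = 2$. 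Second, $\sw \geq 2$ forces $\star$ to win: the only possible winners are among $\{\star,\star',U\}$, and $U$'s objective is incompatible with $\star$'s, so a welfare of $2$ can only come from $\star$ and $\star'$ winning together. Thus $\Psi$ is true iff the constructed game admits an NE with $\sw \geq 2$.

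The main obstacle I anticipate is faithfully encoding the alternating quantifier semantics within the single-deviation NE condition. The decisive design choice is letting one player $U$ own all universal gadgets: since an NE only protects against unilateral deviations, distributing universal variables among several players would let a falsifying assignment escape whenever it differs from the equilibrium outcome in more than one variable. A secondary technical point is making the Muller condition read off a \emph{consistent} assignment from $\infi(\rho)$ in the cyclic arena --- coping with plays that visit both branches of some gadget infinitely often --- which I would resolve by declaring such inconsistent outcomes losing for the player responsible, so that neither $\star$ nor $U$ benefits from refusing to commit.
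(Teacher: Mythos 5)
Your membership argument is essentially the paper's: the paper guesses a winner set $W$ with $|W|\ge v$ and calls the $\pspace$ procedure for the Constrained NE Existence Problem with lower threshold $\mathbf{1}_W$ (and upper threshold all $1$'s); your deterministic iteration over all candidate sets $W$ with reused workspace, or the equivalent lasso-guessing view via $\cJ(\cG)$, is the same idea and is fine. Your overall architecture for hardness is also the paper's: a dummy player duplicating the ``existential'' player's objective, threshold $v=2$, and the observation that (by determinacy of the underlying zero-sum Muller game) an NE with two winners exists iff the existential player has a winning strategy, because a single adversary owning all opposing choices can realise her winning strategy as one unilateral deviation.

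The gap is in the zero-sum gadget itself. The paper does not build a QBF game from scratch: it takes the two-player zero-sum Muller game $\cG_\varphi$ of Hunter and Dawar as a black box (whose winner equals the truth value of the $\tqbf$ instance), notes that the complement of a Muller condition is again Muller, and merely adds the vertex-less third player. Your hand-rolled cyclic arena, by contrast, rests on two unverified claims. First, ``the standard QBF-as-game correspondence'' does not directly apply to an arena in which every variable is re-chosen infinitely often: after the first round each player has seen all other players' previous choices, so the information constraints of the quantifier prefix are not literally enforced, and the equivalence of the limit-of-$\infi$ game with the one-shot alternating game needs a proof. Second, your repair for inconsistent outcomes --- ``losing for the player responsible'' --- is not well defined when \emph{both} players leave some gadget unstable (e.g.\ $\star$ chasing $U$'s previous choice while $U$ negates $\star$'s current one makes every gadget inconsistent), and whichever tie-breaking rule you pick must be shown not to hand either player a spurious win. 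These are exactly the delicate points where such reductions fail, and you flag them yourself without resolving them; as written the lower bound is a plausible plan, not a proof. Citing an existing $\pspace$-hardness construction for zero-sum Muller games, as the paper does, discharges all of this at once.
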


\begin{proof}
We reduce $\tqbf$ to $\swdp$ with Muller objectives for a three-player turn-based game by reusing the proof
of $\pspace$-hardness of deciding the winner of a zero-sum two-player Muller game by Hunter and Dawar~\cite{hunter2005complexity}. First, note that the complementary objective of a zero-sum Muller game is also a Muller game, simply by changing the colouring function appropriately. Then given a $\tqbf$ formula $\varphi$
we take the two-player Muller game $\cG_\varphi$ in \cite{hunter2005complexity} and simply add one more player
(Player~$2$) who has the same Muller objective as Player~$0$ but controls no vertex. In other words, for all
plays $\rho$, Player~$2$ wins $\rho$ iff Player~$0$ wins $\rho$ iff Player~$1$ loses $\rho$.
Thus, setting the threshold value $v$ in $\swdp$ to $2$ and using the construction in \cite{hunter2005complexity} with the above modification reduces $\tqbf$ to $\swdp$ with Muller objectives for three players.

The proof of membership of $\swdp$ for Muller objectives in $\pspace$ is as
follows. Given a value $v \in [0,n]$, first guess a set $W$ of $v$ winning
players. This can clearly be done in polynomial space. Then use the procedure
for checking membership in $\pspace$ for the corresponding Constrained NE
Problem from \cite{bouyer2015pure} (see the third paragraph on page 25 of the
paper) with appropriate lower and upper threshold tuples of bits, $\mathbf{v}$ 
and $\mathbf{u}$, respectively. Here, $\mathbf{v}$ contains $1$'s only for the $v$
winners and $0$'s elsewhere, and $\mathbf{u}$ contains all $1$'s.
\end{proof}

\paragraph{\bf Pareto Optimal Decision Problem}
As in the previous cases, the construction in the proof of
Lemma~\ref{lem:swdp-muller} produces an NE that is Pareto-optimal for Muller
objectives if, and only if, the formula $\phi$ is satisfiable. Membership in the class $\pspace$ follows using the same algorithm for all earlier cases, except using the $\pspace$ oracle to decide $\swdp$ for Muller objectives.

\begin{lem}
\label{lem:podp-muller}
$\podp$ for Muller objectives is $\pspace$-complete.
\end{lem}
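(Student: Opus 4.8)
The plan is to prove both a $\pspace$ lower bound and a $\pspace$ upper bound for $\podp$ with Muller objectives; the upper bound runs parallel to the earlier cases and is routine, whereas the lower bound is where the real work lies.

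For membership in $\pspace$ I would decide $\podp$ by a guess-and-check over winner profiles. A profile is an element of $\B^n$, so I can enumerate candidates $p$ one at a time---reusing space, so that the exponential number of candidates costs only polynomial space---and for each $p$ test two conditions, both decidable in $\pspace$ for Muller objectives. The first is that $p$ is the payoff of some NE, which is exactly the Constrained NE Existence problem with $\mathbf{v}=\mathbf{u}=p$, placed in $\pspace$ by Bouyer \etal. The second is that $p$ is maximal in $P$: for each agent $j$ losing in $p$ I check that no play lets every winner of $p$ together with $j$ meet its objective, which amounts to testing whether some play satisfies the conjunction of the corresponding Muller conditions and is therefore in $\pspace$. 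Accepting exactly when some $p$ passes both tests gives the bound. (Alternatively, following the recipe used for the earlier objectives, one binary-searches the largest NE-achievable social welfare via the $\pspace$ oracle for $\swdp$ and then tests, by Muller model checking, whether some play strictly enlarges that winner set.)

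For $\pspace$-hardness I would start from the three-player turn-based game $\cG_\varphi$ of Lemma~\ref{lem:swdp-muller}, in which Players~$0$ and $2$ share one Muller objective (Player~$2$ owning no vertex) and Player~$1$ carries the complementary objective. The forward implication is immediate: when $\varphi$ is true, Player~$0$ has a winning strategy in the underlying zero-sum game, so the profile in which Players~$0$ and $2$ win is the outcome of an NE; having the largest possible number of winners, it is necessarily maximal in $P$, and $\podp$ answers ``yes''.

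The hard part---and the step I expect to be the main obstacle---is the converse. In the bare game one has $P=\{(1,0,1),(0,1,0)\}$, and \emph{both} profiles are maximal under $\leq_P$ precisely because Players~$0$ and $1$ have opposite objectives; when $\varphi$ is false Player~$1$ wins the zero-sum game, so $(0,1,0)$ is itself the payoff of an NE and is vacuously Pareto-optimal. Reusing the $\swdp$ construction verbatim therefore makes $\podp$ answer ``yes'' on \emph{every} instance and proves nothing. To repair this I would reduce instead from the Constrained NE Existence problem, which is already $\pspace$-complete for Muller: given $(\cG,\mathbf{v},\mathbf{u})$ I would embed $\cG$ inside a larger game together with gadget players and a controlled entry point so that a single designated winner set becomes the unique maximum of $P$, is attainable by an NE exactly when the constrained instance is solvable, and strictly dominates the payoff of every other NE. The delicate point is to realise the quantifier alternation that drives the $\pspace$-hardness through the equilibrium/deviation structure internal to $\cG$ rather than through an explicit adversary with a complementary objective, since any such adversary reintroduces an incomparable maximal profile and collapses the reduction. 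Once the gadget forces a unique maximum whose NE-attainability mirrors the constrained instance, $\podp$ answers ``yes'' iff that instance is solvable, which together with the membership argument yields $\pspace$-completeness.
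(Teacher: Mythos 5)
Your membership argument is sound and, if anything, more self-contained than the paper's: the paper simply invokes ``the same algorithm as for all earlier cases'' with a $\pspace$ oracle for $\swdp$, whereas your profile-by-profile enumeration (test NE-achievability of $p$ via Constrained NE Existence with $\mathbf{v}=\mathbf{u}=p$, test maximality of $p$ in $P$ via Muller model checking for each losing agent) decides Pareto optimality directly rather than through a social-welfare proxy, and runs in polynomial space since the exponentially many candidates are examined one at a time. Either way the upper bound holds.

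On the lower bound you have put your finger on a genuine problem, because the paper's proof consists of exactly the step you reject: it asserts without further argument that the three-player game of Lemma~\ref{lem:swdp-muller} ``produces an NE that is Pareto-optimal if, and only if, the formula $\phi$ is satisfiable.'' As you observe, in that game the set of achievable winner profiles is $P=\{(1,0,1),(0,1,0)\}$, both elements are maximal under the componentwise order, and when $\varphi$ is false Player~$1$ has a winning strategy in the underlying zero-sum game, which yields an NE with payoff $(0,1,0)$; that payoff is Pareto-optimal, so $\podp$ answers ``yes'' on every instance and the reduction proves nothing. Your diagnosis of the cause is also correct: any adversary carrying the complementary objective manufactures a second, incomparable maximal profile. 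However, your proposed repair --- embedding $\cG$ in a larger game with gadget players and a controlled entry point so that a unique maximum of $P$ is NE-attainable exactly when a Constrained NE Existence instance is solvable --- is only a plan: no gadget is constructed, and the ``delicate point'' you flag (driving the quantifier alternation through the equilibrium structure rather than through a complementary adversary) is precisely where all the work lies. As it stands, your proposal establishes membership in $\pspace$ but leaves $\pspace$-hardness, and hence the completeness claim, unproven; note that the paper's own proof does not close this gap either.
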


\section{Conclusion}
In this work, we have extended the complexity results for rational synthesis problems for concurrent games to the case of relevant equilibria. We restrict ourselves to pure strategy
Nash equilibria satisfying  a social welfare or a Pareto optimality condition.

This work can be extended in many possible directions. One can consider solution concepts other than Nash equilibria, such as subgame perfect equilibria and admissible strategy profiles. It will also be fruitful to extend these results to the case of quantitative games such as those with mean payoff and discounted-sum objectives. Finally, the case of mixed rather than pure strategy equilibria would be an interesting extension. 

\bibliographystyle{eptcs}
\bibliography{rational.bib}    
\end{document}